    \let\MYcaption\@makecaption
    \let\@makecaption\MYcaption
\tikzset{>=latex}
\renewcommand{\mkbegdispquote}[2]{\itshape}
\newcommand{\ubar}[1]{\text{\b{$#1$}}}
\DeclareMathOperator*{\argmin}{arg\,min}
\newtheorem{theorem}{Theorem}
\newtheorem{lemma}{Lemma}
\newtheorem{remark}{Remark}
\newtheorem{definition}{Definition}
\newtheorem{proposition}{Proposition}
\newtheorem{assumption}{Assumption}
\begin{document}
\title{Exploiting Data Significance in Remote Estimation of Discrete-State Markov Sources}
\author{Jiping~Luo,~Nikolaos~Pappas,~\IEEEmembership{Senior~Member,~IEEE}
\thanks{This work has been supported in part by the Swedish Research Council (VR), ELLIIT, the Graduate School in Computer Science (CUGS), the European Union (ETHER, 101096526, ROBUST-6G, 101139068, and 6G-LEADER, 101192080), and the European Union's Horizon Europe research and innovation programme under the Marie Skłodowska-Curie Grant Agreement No 101131481 (SOVEREIGN). An earlier version of this paper was presented in part at the 2024 ACM MobiHoc~\cite{jipingMobiHoc2024}. \textit{(Corresponding author: Jiping~Luo.)}}
\thanks{The authors are with the Department of Computer and Information Science, Link\"oping University, Link\"oping 58183, Sweden (e-mail: jiping.luo@liu.se; nikolaos.pappas@liu.se).}
}

\maketitle

\begin{abstract}
    We consider semantics-aware remote estimation of a discrete-state Markov source with both normal (low-priority) and alarm (high-priority) states. Erroneously announcing a normal state at the destination when the source is actually in an alarm state (i.e., missed alarm) incurs a significantly higher cost than falsely announcing an alarm state when the source is in a normal state (i.e., false alarm). Moreover, consecutive estimation errors may cause significant lasting impacts, such as maintenance costs and misoperations. Motivated by this, we introduce two new metrics, the Age of Missed Alarm (AoMA) and the Age of False Alarm (AoFA), to capture the lasting impacts incurred by different estimation errors. Notably, these two age processes evolve interdependently and distinguish between different error types. Our goal is to design a transmission policy that achieves an optimized trade-off between lasting impact and communication cost. The problem is formulated as a countably infinite-state Markov decision process (MDP) with an unbounded cost function. We show the existence of a simple switching policy with distinct thresholds for each age process and derive closed-form expressions for its performance. For symmetric and non-prioritized sources, we show that the optimal policy reduces to a threshold policy with identical thresholds. For numerical tractability, we propose a finite-state approximate MDP and prove that it converges exponentially fast to the original MDP in the truncation size. Finally, we develop an efficient search algorithm to compute the optimal switching policy and validate our theoretical findings with numerical results.
\end{abstract}
\begin{IEEEkeywords}
    Semantic communications, remote estimation, Markov decision process, age of information, data significance.
\end{IEEEkeywords}

\section{Introduction}
\IEEEPARstart{E}{fficient} remote state estimation is the key to various networked control systems (NCSs), such as environmental monitoring, smart factories, swarm robotics, and connected autonomous vehicles~\cite{hespanha2007survey, park2018wireless, luo2023real}. The remote estimation problem involves a sensor sending status updates about an information source to a remote receiver over a wireless channel. In light of practical constraints, such as battery capacity, bandwidth limit, and network contention, a fundamental question is how to achieve an optimal trade-off between estimation quality and communication cost~\cite{baillieul2007control, ling2011sensor, chen2017event, chakravorty2017fundamental, luo2024semantic}. A promising approach is to exploit the \emph{semantics} of information, in terms of significance, to reduce the amount of uninformative data transmissions~\cite{kountouris2021semantics, gielis2022critical, luo2024semantic, luo2025cost}.

Accuracy and freshness are two dominant information attributes in the remote estimation literature. Accuracy is measured by distortion metrics such as mean-square error (MSE) and Hamming distortion, while freshness is typically captured by the notion of Age of Information (AoI), defined as the time elapsed since the most recently received update was generated (see, e.g.,~\cite{kosta2017age, sun2019age, yates2021age}). In some systems, distortion and AoI are closely related. For example, the minimum MSE of a linear time-invariant (LTI) system is a monotonically nondecreasing function of AoI~\cite{wu2020learning, ayan2019age, jiping2025wearing}. In such systems, AoI serves as a sufficient statistic for decision-making, and the monotonicity implies the existence of threshold-type optimal transmission policies. In general, however, the relationship between age and distortion is not well understood, thus it is interesting to be further investigated~\cite{inan2021optimal, bastopcu2021age, jayanth2023distortion}. 

This paper aims to answer the following questions:
\begin{displayquote}
    Does more accurate or fresher mean more valuable? Are all states equally important?
\end{displayquote}
We primarily focus on the remote estimation of a discrete-state Markov source with prioritized states, as depicted in Fig.~\ref{fig:system-model}. We first explain that distortion and AoI become inefficient in such systems and then introduce a new metric.

\begin{figure}[t]
    \centering
    \includegraphics[width=1\linewidth]{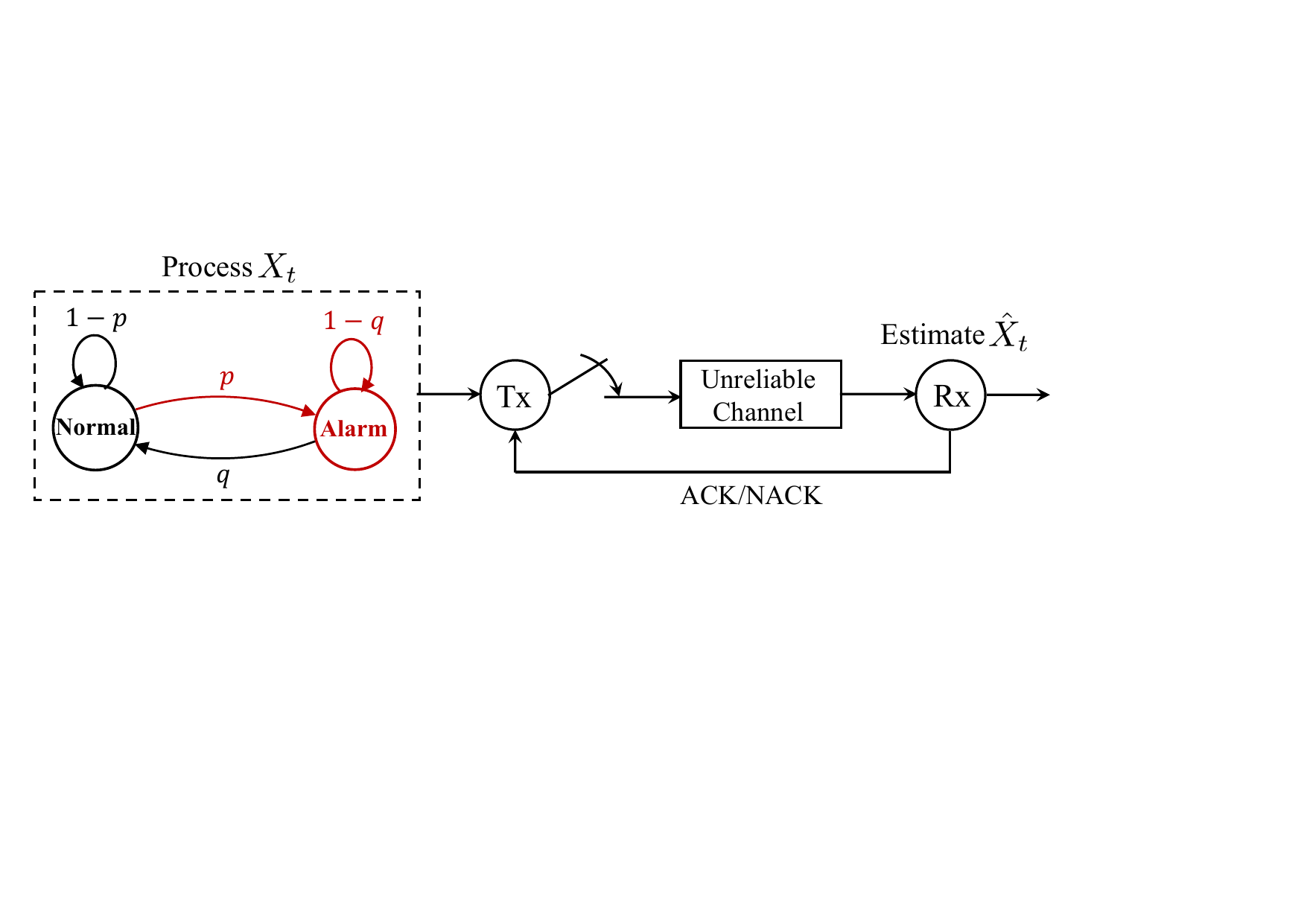}
    \caption{Remote estimation of a Markov source with normal and alarm states.}
    \label{fig:system-model}
\end{figure}

A notable difference between Markov and LTI models is that the estimation error does not necessarily evolve monotonically with time. AoI becomes inefficient in such systems as it ignores the source dynamics. Recent studies have proposed alternative metrics to address this issue. The Age of Incorrect Information (AoII) (see e.g.,~\cite{maatouk2020ToN, maatouk2023TWC, chen2024minimizing}) and the cost of memory error~\cite{salimnejad2024real} penalize the system through cost functions that scale with the time it remains erroneous. Notably, these metrics capture the semantic attribute of \emph{lasting impact} in consecutive errors, which is relevant in many NCSs. For example, successive reception of erroneous status updates from a remotely controlled drone can lead to wrong operations or even crashes. Moreover, studies in~\cite{champati2022detecting, salimnejad2024version} evaluated freshness in terms of content updates rather than timestamps. The intuition is that the information age need not grow when the content of the source remains unchanged. However, these metrics might not suffice in our problem, as they treat all source states equally, i.e., content-agnostic. This egalitarianism results in inadequate transmissions in alarm states but excessive transmissions in normal states. Intuitively, a remotely controlled drone should update its status more frequently in urgent situations, e.g., off-track or near obstacles. 

There have been efforts to exploit the semantic attribute of \emph{content awareness} in remote estimation systems. For instance, \cite{stamatakis2019control} assigns a quadratic age variable for the alarm state and a linear age variable for the normal state, thus imposing a higher cost for outdated alarm information. The cost of actuation error (CAE)~\cite{pappas2021goal} measures the cost of estimation error based on its contextual relevance and potential control risks to system performance. Similar considerations have been applied in the analysis of control stability~\cite{kriouile2025semantics}. Content-agnostic and content-aware randomized policies have been studied in CAE performance evaluation~\cite{salimnejad2023state, salimnejad2024real}. These policies find application in various remote tracking systems (see, e.g.,~\cite{nikkhah2023age, akar2024query, bastopcu2022using}). Optimal and learning-based policies in multi-source systems were derived in~\cite{luo2024semantic} and~\cite{luo2024goal}.

The most relevant studies to this paper are~\cite{jipingMobiHoc2024} and~\cite{bastopcu2022using}. \cite{jipingMobiHoc2024} is an earlier conference version of the present work, in which we introduced the \emph{Age of Missed Alarm} (AoMA) and the \emph{Age of False Alarm} (AoFA) to capture the lasting impacts of missed alarms and false alarms, respectively. Notably, AoMA and AoFA evolve interdependently, allowing them to distinguish between different estimation errors and synced states, thus content-awareness. We formulated an optimization problem to achieve an optimized tradeoff between age penalty and communication cost. The optimal policy was derived through the lens of Markov decision theory, and some preliminary results were reported without proof. Similar metrics were also studied in the context of health monitoring in~\cite{bastopcu2022using}, where suboptimal content-aware randomized policies were examined.

Our main contributions are as follows: (1) The optimal transmission problem is formulated as a Markov decision process (MDP) with an average-cost criterion and a countably infinite state space. We show that the optimal policy has a \emph{switching structure}; that is, the sensor triggers transmission in certain synced states or when the AoMA or AoFA exceeds a fixed threshold. (2) We derive closed-form expressions for the performance of both switching and threshold policies. Building on these results, we show that when the source is symmetric and the states are equally important, the optimal policy degenerates into a simple \emph{threshold policy} with identical thresholds. (3) For numerical tractability, we truncate the age processes and introduce a finite-state approximate MDP. We show the asymptotic optimality of the truncated MDP and propose an efficient policy search algorithm that circumvents the complexities of classic dynamic programming methods. Numerical results are presented to validate the performance of the optimal switching policy.

\textit{Organization:} Sections~\ref{sec:system-model} and~\ref{sec:problem-formulation} present the system model and the problem formulation. Section~\ref{sec:structural-results} presents our main results on the optimal policy. Section~\ref{sec:solution-approach} shows the asymptotic optimality of the truncated problem. Finally, Sections~\ref{sec:numerical-results} and~\ref{sec:conclusion} present numerical results and conclusions.

\section{System Model}\label{sec:system-model}
\subsection{Remote Estimation Model}
We consider a remote estimation system depicted in Fig.~\ref{fig:system-model}. The model consists of the following ingredients:

\subsubsection{Source} 
The information source is modeled as a binary Markov chain $\{X_t\}_{t\geq 1}$. At each time slot $t$, the source is in either the normal (low-priority) state $0$ or the alarm (high-priority) state $1$. The slot length is defined as the interval between two successive state changes. The states may represent either quantization levels of a physical process or abstract statuses (e.g., operating modes, component failures, or abrupt changes) of a dynamic system~\cite{costa2005mjls}. 

The state transition probability matrix of the chain is
\begin{align}
    Q = \begin{bmatrix}
        \bar{p} & p\\
        q & \bar{q}
    \end{bmatrix}\notag
\end{align}
where $\bar{p} = 1-p$, $\bar{q} = 1-q$, and $Q_{i,j} = \Pr[X_{t+1} = j|X_t = i]$ is the probability of transitioning from state $i$ to state $j$ between two consecutive time slots. To avoid pathological cases, we assume $Q$ is \emph{irreducible}, i.e., $0<p,q<1$. Then the chain admits a stationary distribution $(\nu_i: i \in \{0, 1\})$, where $\nu_0 = \frac{q}{p+q}$ and $\nu_1 = \frac{p}{p+q}$ represent the average proportion of time the chain is in state $0$ and $1$, respectively. The chain is called \emph{symmetric} if $p=q$. We call a chain \emph{positively correlated} if the source evolution has positive autocorrelation, i.e., $p < \bar{q}$. The chain is negatively correlated if $p>\bar{q}$ and i.i.d. if $p = \bar{q}$.

\subsubsection{Channel} We consider an unreliable channel $\{H_t\}_{t\geq 1}$ modeled as an i.i.d. Bernoulli process with mean $p_s$, where $H_t =1$ denotes a successful transmission at time $t$ and $H_t = 0$ denotes a failed transmission. Upon successful reception, the receiver sends an acknowledgment (ACK) to the sensor; otherwise, it sends a negative ACK (NACK) to indicate failure. ACK and NACK packets are assumed error-free.

\subsubsection{Sensor and Receiver}
The sensor observes the source and decides when to transmit a new measurement. Let $A_t\in\mathcal{A}$ denote the decision variable, where $A_t = 1$ means transmission while $A_t = 0$ means no transmission. We assume that a decision made at time $t$ takes effect at the beginning of the next slot, not immediately; that is, if $A_t = 1$, a new packet $X_{t+1}$ is sampled and transmitted at the beginning of slot $t+1$. This action delay accounts for several practical considerations such as measurement delay and processing time~\cite{nilsson1998stochastic, katsikopoulos2003markov}, which can be more prominent in the case of very short duration of transmission slot that is common in 5G and beyond systems. 

We assume that the receiver does not know the source pattern and updates its estimate using the latest received measurement~\cite{maatouk2020ToN, luo2024semantic}, i.e., $\hat{X}_t = X_{G_t}$, where $G_t$ is the generation time of the latest received update. 

\begin{remark}
    The AoI, defined as $\Delta^\text{AoI}_t = t - G_t$, ignores the source dynamics and the information content. Thus, a large AoI does not imply poor estimation performance, as the system may remain synced for some time, and vice versa.
\end{remark}

The information available at the sensor at time $t$ is 
\begin{align}
    I_t = \{X_{1:t}, \hat{X}_{1:t}, A_{1:t-1}\}.\notag
\end{align}
The sensor chooses an action $A_t$ by a \emph{transmission rule} $\pi_t$, 
\begin{align}
    A_t = \pi_t(I_t) = \pi_t(X_{1:t}, \hat{X}_{1:t}, A_{1:t-1}).\label{eq:policy}
\end{align}
Notice that $\pi_t$ can be either deterministic, selecting an action in a given state with certainty, or randomized, specifying a probability distribution on the action space. The collection $\pi=\{\pi_t\}_{t=1}^{\infty}$ is called the \emph{transmission policy}.

\subsection{Content-Aware Age Processes and Performance Measure} 
Since the source states are not equally important, we distinguish the following types of estimation errors:
\begin{itemize}
    \item \textbf{False Alarm (FA):} It occurs when $X_t = 0$ and $\hat{X}_t = 1$, indicating an unnecessary alarm triggered by the receiver. While somewhat less important, FA errors may incur extra operation or maintenance costs.
    \item \textbf{Missed Alarm (MA):} It occurs when $X_t = 1$ and $\hat{X}_t = 0$, indicating a failure to detect an alarm by the receiver. MA errors are more crucial and thus higher penalties.
\end{itemize}
Prolonged persistence in an error state may lead to substantial ramifications beyond merely accumulated costs during that period. We introduce the AoFA and AoMA to quantify, respectively, the lasting impacts of consecutive FA and MA errors. The age processes are recursively defined as
\begin{align}
    \Delta_{t+1}^\text{FA} &:= 
    \begin{cases}
        \Delta_{t}^\text{FA} + 1, &X_{t+1} = 0, \hat{X}_{t+1}=1,\\
        0, &\text{otherwise},
    \end{cases} \\
    \Delta_{t+1}^\text{MA} &:= 
    \begin{cases}
        \Delta_{t}^\text{MA} + 1, &X_{t+1} = 1, \hat{X}_{t+1}=0,\\
        0, &\text{otherwise}.
    \end{cases}
\end{align}
Let $S_t = (X_t, \hat{X}_t, \Delta_{t}^\text{MA}, \Delta_{t}^\text{FA})$ denote the system state. Since $\Delta_{t}^\text{FA}$ and $\Delta_{t}^\text{MA}$ cannot simultaneously be positive, they can be formed in a compact way as
\begin{align}
    \Delta_t = \mathds{1}\{(X_t,\hat{X}_t)\hspace{-0.2em}=\hspace{-0.2em}(0,1)\}\Delta_{t}^\text{FA} + \mathds{1}\{(X_t,\hat{X}_t)\hspace{-0.2em}=\hspace{-0.2em}(1,0)\}\Delta_{t}^\text{MA},\notag
\end{align}
where $\mathds{1}\{\cdot\}$ is the indicator function. For simplicity, in the rest of the paper, we shall use
\begin{align}
    S_t = (X_t, \hat{X}_t, \Delta_{t})
\end{align}
as the system state. The state space is given by
\begin{align}
    \mathcal{S} = \{(0, 0, 0), (1, 1, 0)\}\cup\{(0, 1, \delta), (1, 0, \delta):\delta\geq 1\},\notag
\end{align}
where $\mathcal{S}_\text{synced} = \{(0, 0, 0), (1, 1, 0)\}$, $\mathcal{S}_\text{FA} = \{(0, 1, \delta),\delta\geq 1\}$, and $\mathcal{S}_\text{MA} = \{(1, 0, \delta),\delta\geq 1\}$ are the set of synced states, the set of FA errors, and the set of MA errors, respectively. Note that $\mathcal{S}$ is a \emph{countably infinite} set, as the age processes may grow unboundedly. It is worth mentioning that $S_t = (\Delta_t^\text{MA}, \Delta_t^\text{FA})$ is not a good choice since $S_t = (0,0)$ causes ambiguity in the synced states. We will demonstrate the importance of non-ambiguity in Section~\ref{sec:analysis-switching-policy}.

The (per-stage) cost of being in state $S_t$ is defined as
\begin{align}
    c(S_t) := \beta \Delta_{t}^\text{MA} + (1-\beta) \Delta_{t}^\text{FA}, \label{eq:weighted-age}
\end{align}
where $\beta\in[0,1]$ represents the \emph{significance} of the MA error. The expected cost of taking action $A_t$ in state $S_t$ is given by 
\begin{align}
    c(S_t, A_t) = \mathbb{E}[c(S_{t+1})|S_t, A_t],\label{eq:expected-age-cost}
\end{align}
where the expectation is taken over the channel uncertainties, the source statistics, and the (possibly) random actions. 

\begin{figure}[t]
    \centering
    \includegraphics[width=\linewidth]{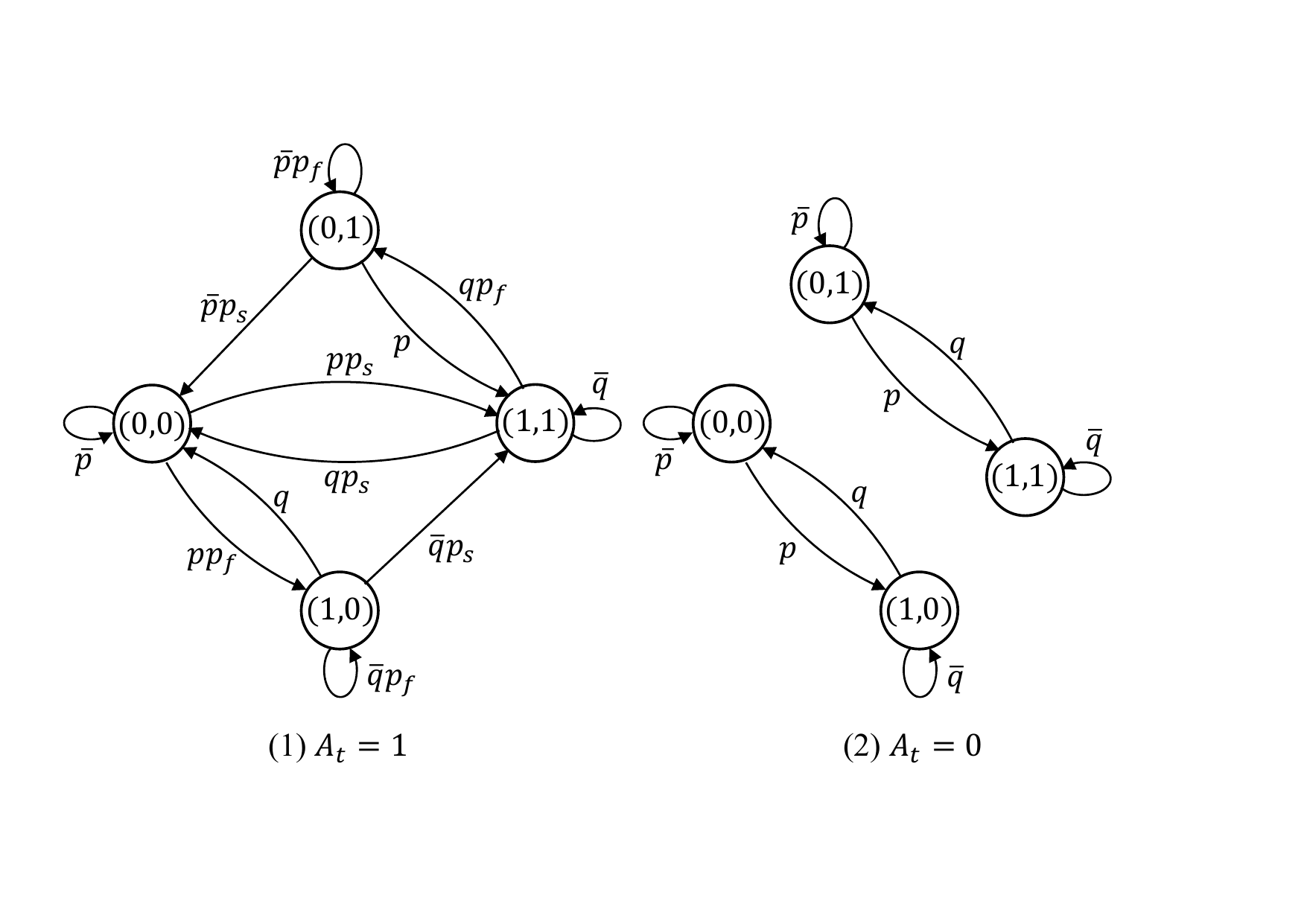}
    \caption{The evolution of estimation error under different actions.}
    \label{fig:state_DTMC_original}
\end{figure}

\subsection{System Dynamics}
Take action $a$ in state $s=(x, \hat{x},\delta)$, the system state will transition to state $s^\prime=(x^\prime, \hat{x}^\prime,\delta^\prime)$ with probability
\begin{align}
    P_{s,s^\prime}(a) &= \Pr[(x^\prime,\hat{x}^\prime,\delta^\prime)|(x,\hat{x},\delta),a]\notag\\
    &=\Pr[(x^\prime,\hat{x}^\prime)|(x,\hat{x}),a]
    \Pr[\delta^\prime|x^\prime, \hat{x}^\prime, \delta],\label{eq:system-dynamics}
\end{align}
where
\begin{align}
    &\Pr[(x^\prime,\hat{x}^\prime)|(x,\hat{x}),a]\notag\\
    =&\begin{cases}
        Q_{i,k}p_s, &a=1, (x,\hat{x})= 
        (i,j),(x^\prime,\hat{x}^\prime)=(k,k),k\neq j,\\ 
        Q_{i,k}p_f, &a=1, (x,\hat{x})= (i,j),(x^\prime,\hat{x}^\prime)=(k,j),k\neq j,\\ 
        Q_{i,j}, &a=1, (x,\hat{x})= (i,j),(x^\prime,\hat{x}^\prime)=(j,j),\\ 
        Q_{i,k}, &a=0, (x,\hat{x})= (i,j),(x^\prime,\hat{x}^\prime)=(k,j),\\
        0, &\text{otherwise},
    \end{cases}\notag
\end{align}
$p_f = 1-p_s, i,j,k\in\{0,1\}$, and
\begin{align}
    \Pr[\delta^\prime|x^\prime, \hat{x}^\prime, \delta] &= \begin{cases}
        1, &x^\prime\neq\hat{x}^\prime, \delta^\prime=\delta+1,\\
        1, &x^\prime=\hat{x}^\prime, \delta^\prime=0,\\
        0, &\text{otherwise}.
        \end{cases}\notag
\end{align}

The evolution of estimation error is depicted in Fig.~\ref{fig:state_DTMC_original}. The chain under the always-transmit policy (i.e., $A_t = 1, \forall t\geq 1$) is irreducible, whereas the non-transmit policy (i.e., $A_t = 0, \forall t\geq 1$) yields isolated groups of states. Hence, the underlying system with dynamics~\eqref{eq:system-dynamics} is multichain~\cite{puterman1994markov}. 

\begin{figure*}[t]
    \begin{subfigure}{0.28\linewidth}
        \centering
        \includegraphics[width=\linewidth]{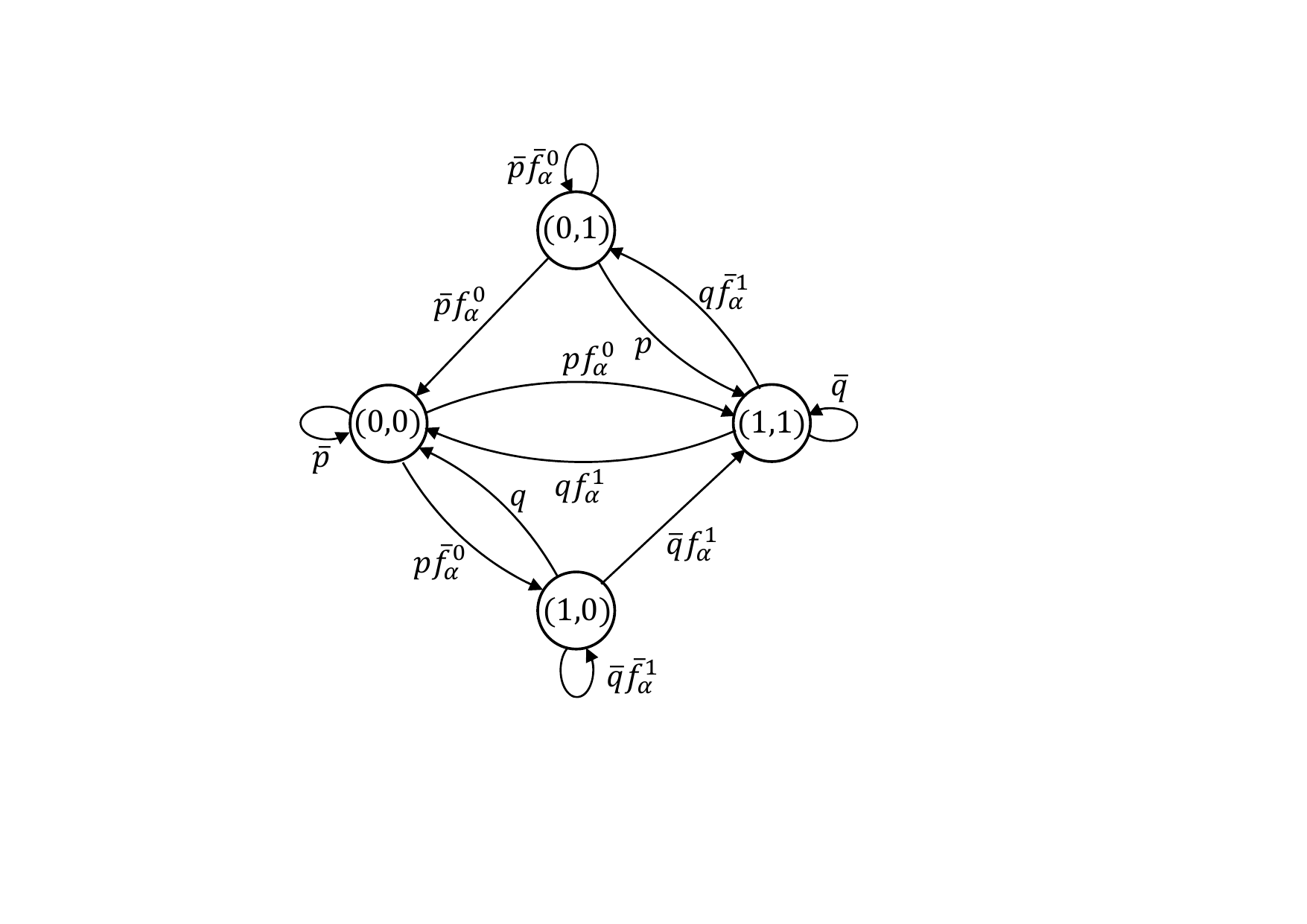}
        \caption{MC1 $\{(X_t, \hat{X}_t)\}_{t\geq 1}$}
        \label{fig:state_DTMC}
    \end{subfigure}
    \hfill
    \begin{subfigure}{0.71\linewidth}
        \centering
        \includegraphics[width=1\linewidth]{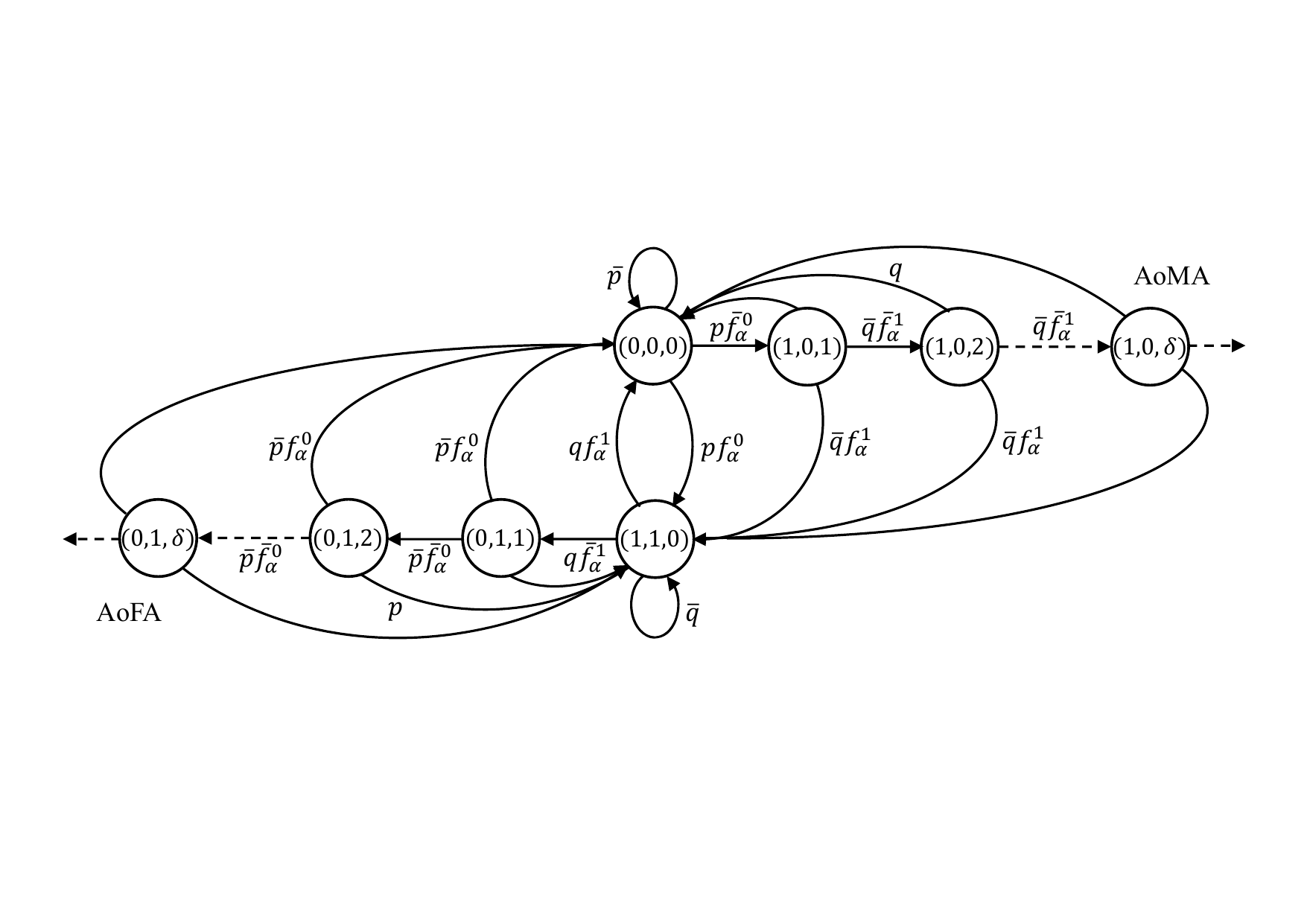}
        \caption{MC2 $\{(X_t, \hat{X}_t, \Delta_t)\}_{t\geq 1}$}
        \label{fig:joint_status_DTMC}
    \end{subfigure}     
    \caption{State evolution of the Markov chains induced by the age-agnostic randomized policy.}
    \label{fig:state-evolution}
\end{figure*}

\section{Problem Formulation and Analysis}\label{sec:problem-formulation}
\subsection{Problem Formulation}
Given a transmission policy $\pi$, the average semantics-aware estimation cost is defined as 
\begin{align}
	C(\pi) := \limsup_{T\rightarrow\infty}\frac{1}{T}\sum_{t=1}^{T}\mathbb{E}^{\pi} \bigl[c(S_t, A_t)\big|S_1 = s_1\bigr],
\end{align}
where $\mathbb{E}^\pi$ represents the conditional expectation, given that policy $\pi$ is employed with initial state $s_1 = (0,0,0)$. The transmission frequency is defined as
\begin{align}
	F(\pi) := \limsup_{T\rightarrow\infty}\frac{1}{T}\sum_{t=1}^{T}\mathbb{E}^\pi \bigl[f(S_t, A_t)\big|S_1 = s_1\bigr],
\end{align}
where $f(S_t, A_t) = \mathds{1}\{A_t\neq 0\}$. Given the communication cost (i.e., the cost of joint sampling and transmission) $\lambda$ for each transmission, the system's performance is measured by
\begin{align}
    \mathcal{L}(\pi) &:= \limsup_{T\rightarrow\infty}\frac{1}{T}\sum_{t=1}^{T}\mathbb{E}^\pi \bigl[\ell(S_t, A_t)\big|S_1 = s_1\bigr]\notag\\
    &= C(\pi) + \lambda F(\pi),\label{eq:total-cost}
\end{align}
where $\ell(S_t, A_t) = c(S_t, A_t) + \lambda f(S_t, A_t)$.

The sensor seeks an optimal policy $\pi^*$ that attains
\begin{align}
    \mathcal{L}^* = \inf_{\pi\in\Pi} \mathcal{L}(\pi),\label{problem:costly}
\end{align}
where $\Pi$ denotes the set of all admissible policies given in~\eqref{eq:policy}.

Problem \eqref{problem:costly} is an MDP characterized by $(\mathcal{S}, \mathcal{A}, P, \ell)$. However, it encounters computing and memory challenges due to the infinite state space and (possibly) unbounded per-stage costs. The main approach to remedy these difficulties is to reveal favorable properties of the optimal policy, thereby restricting the policy searching space. Another difficulty lies in the \textit{interdependence} between AoMA and AoFA (see Fig.~\ref{fig:state-evolution}). This dependency makes it challenging to analyze the limiting behavior of the underlying system.

We use the following quantities to measure the performance loss incurred by any policy $\pi\in\Pi$.

\begin{definition}
    The performance gap of a policy $\pi$ relative to the optimal policy $\pi^*$ is defined as $g(\pi, \pi^*) := \mathcal{L}(\pi) - \mathcal{L}^*$.
\end{definition}

\begin{definition}
    Let $\nu^\pi$ and $\nu^*$ denote the stationary distributions of the system $\{S_t\}$ induced by $\pi$ and $\pi^*$. The policy distance is defined as the Kullback-Leibler (KL) divergence between $\nu^\pi$ and $\nu^*$, i.e., $d(\nu^\pi, \nu^*) 
        := \sum_{s\in\mathcal{S}}\nu^\pi_s\log\frac{\nu^\pi_s}{\nu^*_s}$.
\end{definition}

\begin{remark}
    Note that $d(\nu^\pi, \nu^*) = 0$ implies $g(\pi, \pi^*) = 0$, but the reverse need not hold. That is, two policies may have comparable costs yet exhibit very different behaviors.
\end{remark}

\subsection{Age-Agnostic Randomized Policy}
To better illustrate the properties and evolution of AoMA and AoFA, we examine the content-aware randomized policies, which transmit at each time $t$ with a fixed probability $f_i$ when the source is in state $X_t = i$, where $\sum_{i}f_i\leq 1$. The content-agnostic randomized policies are special cases when $f_0 = f_1$. Since these policies do not depend on the age processes, we refer to them as \emph{age-agnostic} policies in the sequel. Fig.~\ref{fig:state-evolution} depicts the state evolution of the Markov chains $\{(X_t, \hat{X}_t)\}_{t\geq1}$ (MC1) and $\{(X_t, \hat{X}_t, \Delta_t)\}_{t\geq1}$ (MC2) induced by age-agnostic policies, where $f^i_\alpha = p_s f_i$ and $\bar{f}^i_\alpha = 1-p_s f_i$.

The following results are instrumental in establishing the existence of an optimal policy. Assertion (ii) implies that Problem~\eqref{problem:costly} is communicating, while assertions (iii-iv) imply the solvability of the problem.

\begin{proposition}\label{proposition:sa-policy}
    The chains MC1 and MC2 induced by the age-agnostic policy $\pi = (f_0, f_1)$ satisfy:
    \begin{itemize}
        \item[(i)] MC1 is a finite-state ergodic Markov chain and admits the following stationary distribution
        \begin{subequations}
        \begin{align}
        \nu_{0,0} &= q(\bar{p}f^0_\alpha+pf^1_\alpha)(q+\bar{q}f^1_\alpha)/\zeta,\\
        \nu_{0,1} &= pq\bar{f}^1_\alpha(qf^0_\alpha + \bar{q}f^1_\alpha)/\zeta,\\
        \nu_{1,0} &= pq\bar{f}^0_\alpha(\bar{p}f^0_\alpha + pf^1_\alpha)/\zeta,\\
        \nu_{1,1} &= p(qf^0_\alpha+\bar{q}f^1_\alpha)(p+\bar{p}f^0_\alpha)/\zeta,
        \end{align}\label{eq:stationary-distribution-sa}
        \end{subequations}
        where $\zeta = (p+q)\big(qf^0_\alpha+pf^1_\alpha + (1-p-q)f^0_\alpha f^1_\alpha\big)$ and $\nu_{i,j}$ represents the average proportion of time MC1 spends in state $(i,j)$.
        \item[(ii)] MC2 is a countable-state, irreducible, and positive recurrent Markov chain with stationary distribution
        \begin{subequations}\label{eq:main}
        \begin{align}
            \nu_{0,0,0} &= \nu_{0,0}, ~\nu_{1,1,0} = \nu_{1,1}, \label{eq:sa-synced}\\
            \nu_{1,0,k} &= p\bar{f}^0_\alpha(\bar{q}\bar{f}^1_\alpha)^{k-1}\nu_{0,0}, \,k\geq 1\label{eq:sa-MA}, \\
            \nu_{0,1,k} &= q\bar{f}^1_\alpha(\bar{p}\bar{f}^0_\alpha)^{k-1}\nu_{1,1},\, k\geq 1, \label{eq:sa-FA}
        \end{align}\label{eq:sa-stationary}
        \end{subequations}
        where $\nu_{i,j,\delta}$ represents the average proportion of time MC2 spends in state $(i,j,\delta)$. The average cost is 
        \begin{align}
    \hspace{-0.5em}\mathcal{L}(\pi) 
        \hspace{-0.2em}=\hspace{-0.2em} \frac{\beta p\bar{f}^0_\alpha\nu_{0,0}}{(1-\bar{q}\bar{f}^1_\alpha)^2} \hspace{-0.2em}+\hspace{-0.2em} \frac{(1\hspace{-0.2em}-\hspace{-0.2em}\beta)\bar{f}^1_\alpha\nu_{1,1}}{(1-\bar{p}\bar{f}^0_\alpha)^2} \hspace{-0.2em}+\hspace{-0.2em} \frac{\lambda(qf_0+pf_1)}{p+q}.\label{eq:sa-cost}
        \end{align}
        \item[(iii)] There exists a state $z$ in MC2 such that the expected first passage time from any state $i\in \mathcal{S}\backslash\{z\}$ to $z$ is finite. 
        \item[(iv)] For state $z$ in (iii), the expected first passage cost from any state $i\in \mathcal{S}\backslash\{z\}$ to $z$ is also finite. 
    \end{itemize}
\end{proposition}
\begin{IEEEproof}
    (Sketch) The key step is to write the balance equations for the stationary distributions of the chains induced by age-agnostic policies. See Appendix~\ref{proof:sa-policy} for details.
\end{IEEEproof}

\section{Structural Results of the Optimal Policy}\label{sec:structural-results}
\subsection{Optimality of Switching Policy}
This section presents the main results on the optimal policy. We first show that there is no loss of optimality in restricting attention to stationary and deterministic policies.

\begin{proposition}\label{proposition:existence-of-lambda-policy} There exists a bounded function $h$ such that
\begin{align}
    \mathcal{L}^* + h(i) = \min\limits_{a}\{\ell(i,a)+ \sum\nolimits_{j}P_{i,j}(a)h(j)\}\label{problem:bellman-equation}
\end{align}
for all $i\in\mathcal{S}$. The optimal policy $\pi^*$ is deterministic, given by
    \begin{align}
        \pi^*(i) = \argmin\limits_{a}\{\ell(i,a)+ \sum\nolimits_{j}P_{i,j}(a)h(j)\},i\in\mathcal{S}.\label{eq:lamba-optimal-policy}
    \end{align}
\end{proposition}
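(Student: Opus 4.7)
The plan is to invoke the vanishing discount factor (VDF) method, a standard technique for average-cost MDPs with countably infinite state space and possibly unbounded per-step costs (see, e.g., Chapter~8 of \cite{puterman1994markov}). Proposition~\ref{proposition:sa-policy}(ii)--(iv) already supplies the hypotheses needed: a communicating model together with finite expected first-passage time and first-passage cost from every state to a distinguished reference state $z$.

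First, I would introduce the $\alpha$-discounted value function
\begin{align}
    V_\alpha(i) \triangleq \inf_{\pi\in\Pi} \mathbb{E}^\pi\Bigl[\sum\nolimits_{t=1}^\infty \alpha^{t-1} l^\lambda(S_t, A_t) \Big| S_1 = i\Bigr], \quad \alpha\in(0,1).
\end{align}
Since $l^\lambda\geq 0$ and the action set is finite, a standard fixed-point argument yields that $V_\alpha$ satisfies the discounted Bellman equation and is attained by some stationary deterministic policy $\pi_\alpha^*$. Fix the reference state $z$ from Proposition~\ref{proposition:sa-policy}(iii)--(iv) and define the relative value $h_\alpha(i) \triangleq V_\alpha(i) - V_\alpha(z)$.

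Next, I would verify the two key regularity conditions: (i) $V_\alpha(i)<\infty$ for every $i,\alpha$; and (ii) $|h_\alpha(i)|$ is bounded in $\alpha$ for each fixed $i$. Condition (i) is immediate by evaluating $\pi_{\rm{ca}}$ starting from $i$ and invoking Proposition~\ref{proposition:sa-policy}(iv). For condition (ii), I would use the ``hit-$z$-then-act-optimally'' construction: starting from $i$, apply $\pi_{\rm{ca}}$ until the first hitting time $\tau_z$ of $z$, then switch to $\pi_\alpha^*$. Since $l^\lambda\geq 0$, the discount factor can be dropped on the prefix, yielding $V_\alpha(i) \leq \bar{M}_{i,z} + \lambda \bar{T}_{i,z} + V_\alpha(z)$, hence $h_\alpha(i) \leq \bar{M}_{i,z} + \lambda \bar{T}_{i,z}$. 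Swapping the roles of $i$ and $z$ and using Proposition~\ref{proposition:sa-policy}(ii) gives the matching lower bound. A standard diagonal/subsequence argument then extracts $\alpha_n\uparrow 1$ along which $(1-\alpha_n) V_{\alpha_n}(z) \to \mathcal{L}^\lambda$ and $h_{\alpha_n}(i) \to h^\lambda(i)$ pointwise. Passing to the limit in the discounted Bellman equation yields \eqref{problem:bellman-equation}; the $\argmin$ selector in \eqref{eq:lamba-optimal-policy} exists because the action set is finite.

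The main obstacle is the uniform-in-$\alpha$ boundedness of $h_\alpha$, which is the crux of the VDF framework. The ``hit-$z$-then-act-optimally'' trick is the standard remedy, but it hinges critically on the communicating structure in Proposition~\ref{proposition:sa-policy}(ii); without it, the full MDP would be multichain (cf.\ Fig.~\ref{fig:state_DTMC_original}) and the vanishing discount approach would require substantial modifications. A secondary subtlety is verifying that the stationary deterministic policy obtained from the limiting $\argmin$ is indeed average-cost optimal and not merely a candidate; this follows from a standard verification argument identifying $\mathcal{L}^\lambda$ with $\lim_{\alpha\uparrow 1}(1-\alpha) V_\alpha(z)$ and exploiting the pointwise convergence $h_{\alpha_n}\to h^\lambda$ in the limiting Bellman equation.
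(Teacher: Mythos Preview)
Your approach and the paper's are not really different routes: the paper simply invokes \cite[Corollary~7.5.10]{sennott1998stochastic} as a black box, and the vanishing--discount argument you sketch is precisely the machinery underlying that corollary. The paper verifies three hypotheses---C1 (finite first-passage time and cost to a reference state $z$ under some stationary policy), C2 (that policy induces a single positive recurrent class equal to $\mathcal{S}$), and C3 (the level sets $\{s:\min_a l^\lambda(s,a)\leq U\}$ are finite)---and then cites Sennott. Your proposal recovers C1 and C2 from Proposition~\ref{proposition:sa-policy}, but you never touch C3, and that is exactly where your sketch develops a gap.

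The gap is in the lower bound on $h_\alpha$. You write ``swapping the roles of $i$ and $z$ and using Proposition~\ref{proposition:sa-policy}(ii) gives the matching lower bound,'' but Proposition~\ref{proposition:sa-policy}(iii)--(iv) only furnishes $\bar T_{i,z}<\infty$ and $\bar M_{i,z}<\infty$ \emph{from} every $i$ \emph{to} $z$, not the reverse. Positive recurrence (part (ii)) does give $\bar T_{z,i}<\infty$, but it does \emph{not} give $\bar M_{z,i}<\infty$: the per-step cost is unbounded in $\delta$, so finiteness of the expected hitting time does not control the expected accumulated cost along the way. In Sennott's framework the uniform (or pointwise) lower bound on $h_\alpha$ is obtained instead from the finite--level--set condition C3, which in this model holds because $l^\lambda((1,0,\delta),a)$ and $l^\lambda((0,1,\delta),a)$ both grow linearly in $\delta$ for either action. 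Without C3 (or an explicit proof that $\bar M_{z,i}<\infty$ for every $i$), your diagonalization step is not justified, and you obtain at best the average-cost optimality \emph{inequality} rather than the equality~\eqref{problem:bellman-equation}. The fix is short: verify C3 exactly as the paper does and then either cite Sennott directly or use C3 in place of the swap argument to close the lower bound.
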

\begin{IEEEproof}
    We rely on \cite[Corollary~7.5.10]{sennott1998stochastic} and verify that the following conditions are valid:
    \begin{enumerate}
        \item[C1:] There exists a stationary policy $\pi^\prime$ and a state $z\in\mathcal{S}$ such that MC2 has finite expected first passage time and cost from every other state to $z$;
        \item[C2:] MC2 induced by policy $\pi^\prime$ is positive recurrent;
        \item[C3:] Given any positive constant $U$, $\mathcal{S}_U := \{s\in\mathcal{S}|\ell(s,a)\leq U~\text{for some}~a\}$ is finite.
    \end{enumerate}
    C1 and C2 follow from Proposition~\ref{proposition:sa-policy}. C3 can be verified via the always-transmit policy.
\end{IEEEproof}

\begin{remark}
    One may try to apply dynamic programming methods to solve Bellman's equation \eqref{problem:bellman-equation}. However, it is not applicable in practice as we cannot iterate over infinitely many states. Furthermore, even with such an optimal policy at hand, it is impossible to store all state-action pairs due to memory constraints at the sensor. Deep reinforcement learning (DRL) techniques can partially address these challenges, however, at the cost of suboptimality and lack of interpretability.
\end{remark}

The following theorem shows that the optimal policy exhibits a simple \emph{switching structure}, which facilitates policy storage and algorithm design. Notably, due to the consideration of action delay, the best strategy may trigger transmission in the synced states. For example, if the sensor transmits in the synced state $(0,0,0)$, there is a high probability that the system will directly transition to synced state $(1,1,0)$, thereby forcing the system into low-cost FA errors. 

\begin{definition}\label{definition:state-ordering}
    We define the ordering $s_1 \leq s_2$ for $s_1, s_2 \in \{(0,0,0)\}\cup\mathcal{S}_{\rm{MA}}$ if their ages satisfy $\delta_1\leq \delta_2$. Similarly, define $s_1 \leq s_2$ if $s_1, s_2 \in \{(1,1,0)\}\cup\mathcal{S}_{\rm{FA}}$ and $ \delta_1\leq \delta_2$. 
\end{definition}

\begin{assumption}\label{assumption:source}
    The source is positively correlated or i.i.d. 
\end{assumption}

\begin{theorem}\label{theorem:structure-lambda-optimal}
    The optimal policy has a switching structure. That is, there exists two positive thresholds $\bar{\delta}^*, \ubar{\delta}^*\geq 1$ and two actions $\bar{a}^*, \ubar{a}^*\in\{0, 1\}$ such that
    \begin{align}
        \pi^*(s) = \begin{cases}
            1, &s\geq (1, 0, \bar{\delta}^*)~\text{or}~s\geq (0, 1, \ubar{\delta}^*),\\
            \bar{a}^*, &s = (0,0,0),\\
            \ubar{a}^*, &s = (1,1,0),\\
            0, &\text{otherwise},
        \end{cases}\label{eq:switching-policy}
    \end{align}
    Furthermore, under Assumption~\ref{assumption:source}, there exists two thresholds $\bar{\delta}^*, \ubar{\delta}^*\geq 0$ such that 
    \begin{align}
        \pi^*(s) = \begin{cases}
            1, &s \geq \bar{s}^*~\text{or}~s\geq \ubar{s}^*,\\
            0, &\text{otherwise},
        \end{cases}
    \end{align}
    where $\bar{s}^* = (1, 0, \bar{\delta}^*)$ if $\bar{\delta}^*\geq 1$ and $\bar{s}^* = (0, 0, 0)$ if $\bar{\delta}^*=0$, and $\ubar{s}^* = (0, 1, \ubar{\delta}^*)$ if $\ubar{\delta}^*\geq 1$ and $\ubar{s}^* = (1, 1, 0)$ if $\ubar{\delta}^*=0$.
\end{theorem}
\begin{IEEEproof} (Sketch) We rely on~\cite[Theorem~8.11.3]{puterman1994markov} to show that the function $\ell(i,a)+ \sum\nolimits_{j}P_{i,j}(a)h(j)$ in~\eqref{problem:bellman-equation} is submodular in age for each fixed error. See Appendix~\ref{poof:structure-lambda-optimal} for details.
\end{IEEEproof}

Under the mild Assumption~\ref{assumption:source}, finding the optimal policy reduces to determining two optimal threshold values $\bar{\delta}^*$ and $\ubar{\delta}^*$. Thus, the sensor only needs to store two threshold values instead of all possible state-action pairs, thereby remedying the ``curse of memory". Fig.~\ref{fig:threshold-policy} depicts the state evolution of MC2 induced by a switching policy with thresholds $\bar{\delta}, \ubar{\delta}\geq 1$. The sensor triggers transmission only when the AoMA exceeds $\bar{\delta}$ or the AoFA exceeds $\ubar{\delta}$.

\begin{figure}[t]
    \centering
    \includegraphics[width=\linewidth]{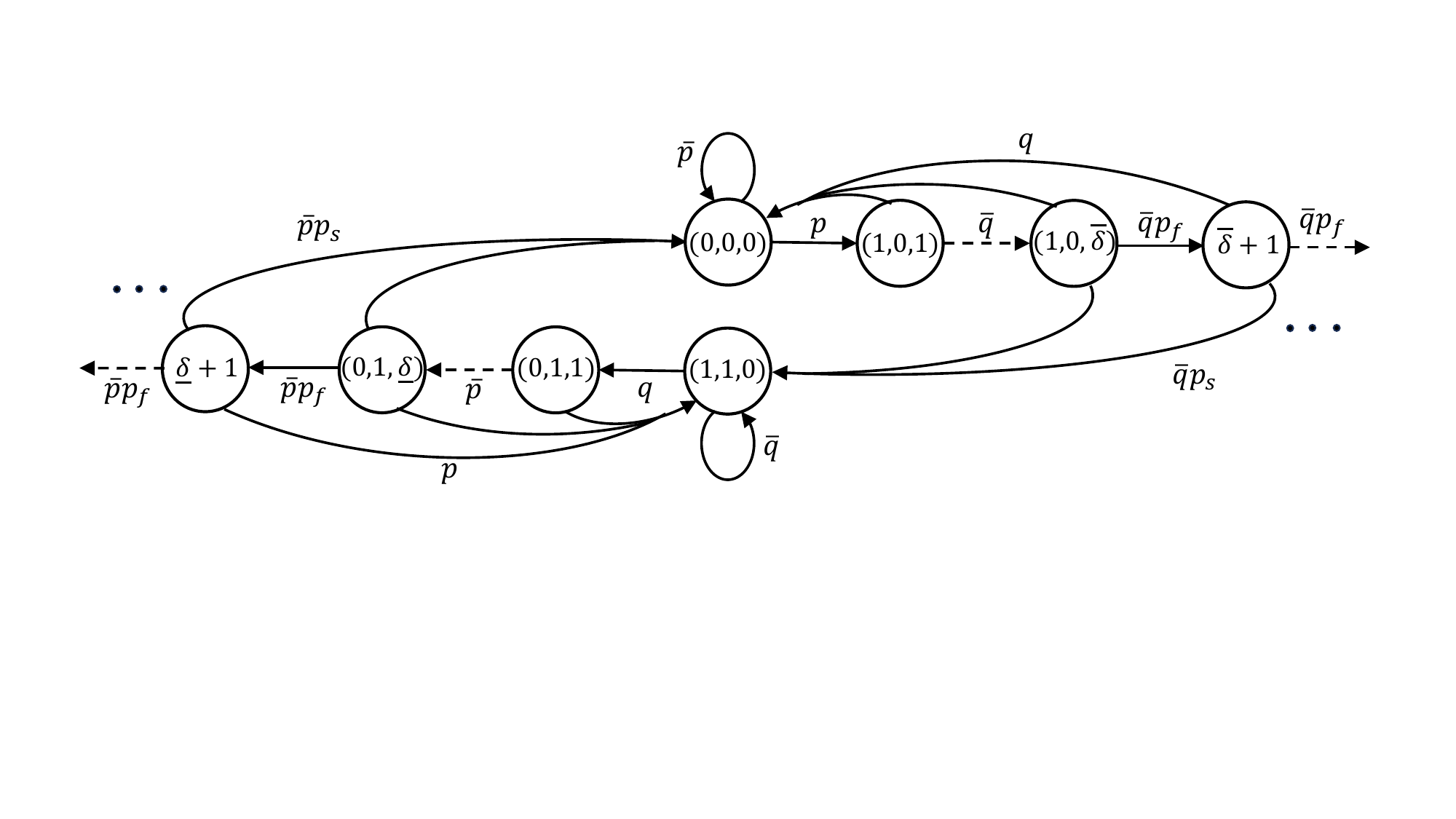}
    \caption{State evolution of MC2 induced by the switching policy.}
    \label{fig:threshold-policy}
\end{figure}

\subsection{Analysis of Switching Policy}\label{sec:analysis-switching-policy}
We proceed under Assumption~\ref{assumption:source} and distinguish between the following four cases: 
\begin{enumerate}
    \item[(i)] no transmission in synced states, i.e., $\bar{\delta},\ubar{\delta}\geq 1$;
    \item[(ii)] transmission in $(0,0,0)$, i.e., $\bar{\delta} = 0$ and $\ubar{\delta} \geq 1$;
    \item[(iii)] transmission in $(1,1,0)$, i.e., $\ubar{\delta} = 0$ and $\bar{\delta} \geq 1$;
    \item[(iv)] transmission in both synced states, i.e., $\bar{\delta} = \ubar{\delta} = 0$.
\end{enumerate}
When Assumption~\ref{assumption:source} does not hold, transmitting in the synced state $(0,0,0)$ does not imply transmission in all MA errors, as a positive threshold for AoMA may still exist. For example, case (iv) corresponds to transmitting in the synced states as well as in MA and FA errors whose age exceeds the thresholds $\bar{\delta}, \ubar{\delta} \geq 1$. Nevertheless, the analysis is analogous.

The main result of this section follows. Proposition~\ref{proposition:stationary-distribution-threshold} presents the stationary distribution of switching policies in the dominating case (i). Theorem~\ref{theorem:average-cost-threshold} derives closed-form expressions for the system's performance.

\begin{proposition}\label{proposition:stationary-distribution-threshold} 
    For a given switching policy $\pi$ with thresholds $\bar{\delta}, \ubar{\delta} \geq 1$, MC2 induced by $\pi$ is irreducible and admits a stationary distribution given by
    \begin{subequations}
    \begin{align}
        \nu_{0,0,0} &= \Gamma_0(a, b), \,\,\nu_{1,1,0} = \Gamma_1(a, b),\label{eq:stationary-threshold-a}\\
        \nu_{1,0,k} &= \begin{cases}
            p\bar{q}^{k-1}\nu_{0,0,0}, & 1\leq k\leq \bar{\delta},\\
            p\bar{q}^{\bar{\delta}-1}(\bar{q}p_f)^{k-\bar{\delta}}\nu_{0,0,0}, & k> \bar{\delta},
        \end{cases}\label{eq:stationary-threshold-c}\\
        \nu_{0,1,k} &= \begin{cases}
            q\bar{p}^{k-1}\nu_{1,1,0}, & 1\leq k\leq \ubar{\delta},\\
            q\bar{p}^{\ubar{\delta}-1}(\bar{p}p_f)^{k-\ubar{\delta}}\nu_{1,1,0}, & k> \ubar{\delta},\label{eq:stationary-threshold-d}
        \end{cases}
    \end{align}
    \end{subequations}
    where 
    \begin{gather*}
        \Gamma_0(a, b) = \frac{\bar{p}b(\ubar{\delta})}{(1+a(1))\bar{p}b(\ubar{\delta})+(1+b(1))\bar{q}a(\bar{\delta})},\\
        \Gamma_1(a, b) = \frac{\bar{q}a(\bar{\delta})}{(1+a(1))\bar{p}b(\ubar{\delta})+(1+b(1))\bar{q}a(\bar{\delta})},\\
        a(1) = \frac{p(1-\bar{q}^{\bar{\delta}-1})}{1-\bar{q}} + a(\bar{\delta}),\,\,\, a(\bar{\delta}) =\frac{p\bar{q}^{\bar{\delta}-1}}{1-\bar{q}p_f},\\
        b(1) =\frac{q(1-\bar{p}^{\ubar{\delta}-1})}{1-\bar{p}} + b(\ubar{\delta}), \,\,\, b(\ubar{\delta}) = \frac{q\bar{p}^{\ubar{\delta}-1}}{1-\bar{p}p_f}.
    \end{gather*}
\end{proposition}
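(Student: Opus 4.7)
The plan is to exploit the structure of the chain induced by a switching policy with thresholds $\bar\delta,\ubar{\delta}\geq 1$: the two synced states form a ``root,'' each age branch is a birth-and-sudden-death chain emanating from the root, and the two halves couple only through the $\bar q p_s$ (resp.\ $\bar p p_s$) transitions from high-age MA (resp.\ FA) states to the opposite synced state. First I would list the one-step transitions. With $\bar\delta,\ubar{\delta}\geq 1$, synced states never transmit, so $(0,0,0)\to(0,0,0)$ with probability $\bar p$ and $(0,0,0)\to(1,0,1)$ with probability $p$, and symmetrically for $(1,1,0)$. On the MA branch, $(1,0,k)$ moves to $(0,0,0)$ with probability $q$ and climbs to $(1,0,k+1)$ with probability $\bar q$ if $k<\bar\delta$, or $\bar q p_f$ if $k\geq\bar\delta$; in the latter regime there is an additional jump to $(1,1,0)$ with probability $\bar q p_s$. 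The FA branch is obtained by swapping $p\leftrightarrow q$ and $\bar\delta\leftrightarrow\ubar{\delta}$. Irreducibility is immediate: starting from $(0,0,0)$ one can climb to $(1,0,\bar\delta)$, cross to $(1,1,0)$, climb to $(0,1,\ubar{\delta})$, and return to $(0,0,0)$, each with positive probability; positive recurrence will be justified \emph{a posteriori} by exhibiting a summable invariant measure.

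Next I would solve the balance equations on each age branch by forward recursion. The unique inflow at $(1,0,1)$ is $p\,\nu_{0,0,0}$; for $2\leq k\leq\bar\delta$ the only inflow is from $(1,0,k-1)$ through the $\bar q$ transition; and for $k>\bar\delta$ the recursion carries the multiplicative factor $\bar q p_f$. This yields \eqref{eq:stationary-threshold-c} directly, and \eqref{eq:stationary-threshold-d} follows by the symmetric argument. Summing the two geometric pieces gives the convenient aggregates
\begin{equation*}
\sum_{k\geq 1}\nu_{1,0,k}=a(1)\,\nu_{0,0,0},\qquad \sum_{k\geq\bar\delta}\nu_{1,0,k}=a(\bar\delta)\,\nu_{0,0,0},
\end{equation*}
and symmetric expressions for the FA branch involving $b(1)$ and $b(\ubar{\delta})$, which exactly reproduce the quantities defined in the statement.

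Finally I would determine the two synced probabilities by a single global cut. Splitting $\mathcal{S}$ into the ``zero'' half $\{(0,0,0)\}\cup\mathcal{S}_{\rm MA}$ and the ``one'' half $\{(1,1,0)\}\cup\mathcal{S}_{\rm FA}$, the only cross-flow from left to right is the $\bar q p_s$ jump from the MA tail, contributing $\bar q p_s\,a(\bar\delta)\,\nu_{0,0,0}$, and symmetrically from right to left $\bar p p_s\,b(\ubar{\delta})\,\nu_{1,1,0}$. Equating the two and combining with the normalization $(1+a(1))\,\nu_{0,0,0}+(1+b(1))\,\nu_{1,1,0}=1$ yields \eqref{eq:stationary-threshold-a}, i.e.\ $\nu_{0,0,0}=\Gamma_0(a,b)$ and $\nu_{1,1,0}=\Gamma_1(a,b)$. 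The main delicate point I expect is the identification of this cut: one must verify that no mass leaks through a direct $(0,0,0)\leftrightarrow(1,1,0)$ transition (which would appear if a threshold were zero, as in cases (iii)--(iv) discussed before the proposition) and that the $q$-transition from every MA state back to $(0,0,0)$ stays strictly within the left half regardless of whether a transmission is attempted at that state. This is exactly where the hypothesis $\bar\delta,\ubar{\delta}\geq 1$ does the work, and everything else reduces to bookkeeping on geometric series.
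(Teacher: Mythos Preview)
Your proposal is correct and follows the same overall plan as the paper: derive the branch probabilities by forward recursion along each age chain, sum the geometric pieces to obtain $a(1), a(\bar\delta), b(1), b(\ubar\delta)$, and then determine the two synced masses from one linear relation plus normalization. Where you differ is in how that relation is obtained. The paper writes the \emph{local} balance equations at both synced states, obtaining $(p-qa(1))\nu_{0,0,0}=\bar p\,p_s\,b(\ubar\delta)\nu_{1,1,0}$ and $(q-pb(1))\nu_{1,1,0}=\bar q\,p_s\,a(\bar\delta)\nu_{0,0,0}$, and then must verify algebraically---via the identities $p-qa(1)=\bar q\,p_s\,a(\bar\delta)$ and $q-pb(1)=\bar p\,p_s\,b(\ubar\delta)$---that these two equations are in fact the same, so that together with normalization the system has a unique solution. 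Your global cut between the ``zero'' half $\{(0,0,0)\}\cup\mathcal S_{\textrm{MA}}$ and the ``one'' half $\{(1,1,0)\}\cup\mathcal S_{\textrm{FA}}$ delivers the single relation $\bar q\,a(\bar\delta)\nu_{0,0,0}=\bar p\,b(\ubar\delta)\nu_{1,1,0}$ directly, bypassing the consistency check. This is cleaner bookkeeping; the paper's route, on the other hand, makes those algebraic identities explicit, and they are reused verbatim in the subsequent lemmas that handle the boundary cases $\bar\delta=0$ or $\ubar{\delta}=0$.
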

\begin{IEEEproof} (Sketch) Write the balance equations for MC2 induced by switching policies and verify that they yield a unique stationary distribution. See Appendix~\ref{proof:stationary-distribution-threshold} for details.
\end{IEEEproof}

\begin{theorem}\label{theorem:average-cost-threshold}
    For a given switching policy $\pi$ with thresholds $\bar{\delta}, \ubar{\delta}\geq 1$, the average cost is given by
    \begin{align}
        \mathcal{L}(\pi) &= \beta p \psi(\bar{q}, \bar{\delta})\nu_{0,0,0} + (1-\beta) q  \psi(\bar{p}, \ubar{\delta})\nu_{1,1,0}\notag\\
        &\quad+ \lambda \big(a(\bar{\delta})\nu_{0,0,0} + b(\ubar{\delta})\nu_{1,1,0}\big),
    \end{align}
    where $a(\bar{\delta}), b(\ubar{\delta}),\nu_{0,0,0}$, and $\nu_{1,1,0}$ are given in Proposition~\ref{proposition:stationary-distribution-threshold}, and $\psi(x,y)=\frac{1-(x + (1-x)y)x^{y-1}}{(1-x)^2} + \frac{(x p_f + (1-xp_f)y)x^{y-1}}{(1-xp_f)^2}$.
\end{theorem}
\begin{IEEEproof}
    See Appendix~\ref{proof:average-cost-threshold}.
\end{IEEEproof}

The results for cases (ii)-(iv) are summarized in the following lemmas. Proofs are analogous and omitted for brevity.

\begin{lemma}\label{lemma:case01} 
    For a given switching policy $\pi$ with thresholds $\bar{\delta} = 0, \ubar{\delta}\geq 1$, the stationary distribution satisfies
    \begin{subequations}
    \begin{align}
        \nu_{0,0,0} &= \Gamma_0(a, b), \,\,\nu_{1,1,0} = \Gamma_1(a, b),\\
        \nu_{1,0,k} &= pp_f (\bar{q}p_f)^{k-1}, k\geq 1,\\
        \nu_{0,1,k} &= \begin{cases}
            q\bar{p}^{k-1}\nu_{1,1,0}, & 1\leq k\leq \ubar{\delta},\\
            q\bar{p}^{\ubar{\delta}-1}(\bar{p}p_f)^{k-\ubar{\delta}}\nu_{1,1,0}, & k> \ubar{\delta}.
        \end{cases}
    \end{align}
    \end{subequations}
    where $b(1)$ and $b(\ubar{\delta})$ are given in Proposition~\ref{proposition:stationary-distribution-threshold}, and $a(1) = \frac{pp_f}{1-\bar{q}p_f}, a(\bar{\delta}) =\frac{a(1)}{\bar{q}p_f}$. Furthermore, the average cost is given by 
    \begin{align}
        \mathcal{L}(\pi) &= \frac{\beta pp_f \nu_{0,0,0}}{(1-\bar{q}p_f)^2} + (1-\beta) q  \psi(\bar{p}, \ubar{\delta})\nu_{1,1,0}\notag\\
        &\quad + \lambda\big((1+a(1))\nu_{0,0,0} + b(\ubar{\delta})\nu_{1,1,0}\big).
    \end{align}
\end{lemma}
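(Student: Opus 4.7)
The proof of this lemma parallels the arguments used for Proposition~\ref{proposition:stationary-distribution-threshold} and Theorem~\ref{theorem:average-cost-threshold}, specialized to the degenerate threshold $\bar{\delta}=0$. The plan is to characterize the induced DTMC, solve the balance equations, and then assemble the two average costs.

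First, I would describe the transitions. Because $\bar{\delta}=0$, the sensor transmits in the synced state $(0,0,0)$ and in every MA state $(1,0,k)$, $k\geq 1$. The FA branch is unaffected since $\ubar{\delta}\geq 1$, so its transition structure and stationary form carry over verbatim from Proposition~\ref{proposition:stationary-distribution-threshold}, which immediately yields the stated formulas for $\nu_{0,1,k}$, $b(1)$, and $b(\ubar{\delta})$. Using the one-step probabilities in \eqref{eq:system-dynamics} for $a=1$, the only inflow into the MA branch is from $(0,0,0)\to(1,0,1)$ with probability $pp_f$, and within the MA branch each step retains the MA error with probability $\bar{q}p_f$ (source stays at $1$ and the retransmission fails).

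Second, I would solve the MA balance equations. The recursion $\nu_{1,0,1}=pp_f\,\nu_{0,0,0}$ and $\nu_{1,0,k+1}=\bar{q}p_f\,\nu_{1,0,k}$ yields the geometric form $\nu_{1,0,k}=pp_f(\bar{q}p_f)^{k-1}\nu_{0,0,0}$. Summing over $k\geq 1$ gives the auxiliary quantity $a(1)=pp_f/(1-\bar{q}p_f)$, and the shifted variant $a(\bar{\delta})=a(1)/(\bar{q}p_f)$ follows directly. For $\nu_{0,0,0}$ and $\nu_{1,1,0}$ I would write the two balance equations at the synced states, accounting for the inflow from MA and FA states that successfully resync (contributing $q$ and $p$ times the respective branch masses) and the outflow to the opposite synced state. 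This reproduces the same $2\times 2$ linear system as in Proposition~\ref{proposition:stationary-distribution-threshold} in terms of $a(\cdot)$ and $b(\cdot)$; combined with the normalization $\sum_s\nu_s=1$, it produces $\Gamma_0(a,b)$ and $\Gamma_1(a,b)$ with the updated $a(\cdot)$.

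Third, I would compute the costs. The MA part of $C(\pi)$ is $\beta\sum_{k\geq 1}k\,\nu_{1,0,k}=\beta pp_f\,\nu_{0,0,0}\sum_{k\geq 1}k(\bar{q}p_f)^{k-1}=\beta pp_f\,\nu_{0,0,0}/(1-\bar{q}p_f)^2$, while the FA part is inherited unchanged from Theorem~\ref{theorem:average-cost-threshold}. The transmission rate $F(\pi)$ sums the probabilities of all states where $a=1$: the atom at $(0,0,0)$ together with the entire MA branch contributes $(1+a(1))\nu_{0,0,0}$, and the FA states above threshold contribute $b(\ubar{\delta})\nu_{1,1,0}$. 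The main obstacle, as in the base proposition, lies in the coupling between the two age processes through the synced states: one must carefully account for which transitions feed into $(0,0,0)$ and $(1,1,0)$ under the new policy so that the $2\times 2$ balance system is correctly formed. Once that is settled, all remaining computations reduce to routine geometric-series manipulations.
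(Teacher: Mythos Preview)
Your proposal is correct and follows essentially the same route as the paper: describe the induced chain when $\bar{\delta}=0$, derive the geometric MA branch and hence $a(1)$, keep the FA branch from Proposition~\ref{proposition:stationary-distribution-threshold}, solve for the synced probabilities, and then compute $C(\pi)$ and $F(\pi)$ by geometric summation. One small refinement: the balance equations at the synced states are \emph{not} literally the same as in Proposition~\ref{proposition:stationary-distribution-threshold}, because transmitting in $(0,0,0)$ creates a direct inflow $(0,0,0)\to(1,1,0)$ with probability $pp_s$; the paper writes these modified equations out and checks algebraically that they remain consistent and reduce to the simplified relation $\bar{q}a(\bar{\delta})\nu_{0,0,0}=\bar{p}b(\ubar{\delta})\nu_{1,1,0}$ once one sets $a(\bar{\delta})=a(1)/(\bar{q}p_f)$, which is exactly what justifies reusing $\Gamma_0,\Gamma_1$.
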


\begin{lemma}\label{lemma:case10} 
    For a given switching policy $\pi$ with thresholds $\bar{\delta} \geq 1, \ubar{\delta}=0$, the stationary distribution satisfies
    \begin{subequations}
    \begin{align}
        \nu_{0,0,0} &= \Gamma_0(a, b), \,\,\nu_{1,1,0} = \Gamma_1(a, b),\\
        \nu_{1,0,k} &= \begin{cases}
            p\bar{q}^{k-1}\nu_{0,0,0}, & 1\leq k\leq \bar{\delta},\\
            p\bar{q}^{\bar{\delta}-1}(\bar{q}p_f)^{k-\bar{\delta}}\nu_{0,0,0}, & k> \bar{\delta},
        \end{cases}\\
        \nu_{0,1,k} &=  qp_f (\bar{p}p_f)^{k-1}, k\geq 1.
    \end{align}
    \end{subequations}
    where $a(1)$ and $a(\bar{\delta})$ are the same as in Proposition~\ref{proposition:stationary-distribution-threshold}, $b(1) = \frac{qp_f}{1-\bar{p}p_f}$, and $b(\ubar{\delta}) =\frac{b(1)}{\bar{p}p_f}$. Furthermore, the average cost is
    \begin{align}
        \mathcal{L}(\pi) &= \beta p \psi(\bar{q}, \bar{\delta})\nu_{0,0,0} + \frac{(1-\beta) qp_f \nu_{1,1,0}}{(1-\bar{p}p_f)^2}\notag\\
        &\quad + \lambda\big(a(\bar{\delta})\nu_{0,0,0} + (1+b(1))\nu_{1,1,0}\big).
    \end{align}
\end{lemma}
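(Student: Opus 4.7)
The plan is to adapt the reasoning of Proposition~\ref{proposition:stationary-distribution-threshold} and Theorem~\ref{theorem:average-cost-threshold} to the corner case $\bar{\delta}\geq 1,\ \ubar{\delta}=0$, the key structural observation being that setting $\ubar{\delta}=0$ mandates a transmission in the synced state $(1,1,0)$ as well as in every FA-error state, so the FA branch of the chain degenerates from a two-phase (pre/post-threshold) geometric into a single-phase geometric, while the MA branch is structurally identical to the one treated in Proposition~\ref{proposition:stationary-distribution-threshold}.

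First I would write down the induced DTMC and handle the two branches separately. In the MA half, the transitions and their balance equations are unchanged, so $\nu_{1,0,k}$ and the coefficient $a(\bar{\delta})$ carry over verbatim from Proposition~\ref{proposition:stationary-distribution-threshold}. In the FA half, the only inflow to $(0,1,k)$ for $k\geq 2$ is from $(0,1,k-1)$ via ``transmit $+$ source stays at $0$ $+$ channel failure,'' which gives the recursion $\nu_{0,1,k}=\bar{p}p_f\,\nu_{0,1,k-1}$; and the only inflow to $(0,1,1)$ is from the mandatory transmission at $(1,1,0)$ failing on a $1\to 0$ source transition, giving $\nu_{0,1,1}=qp_f\,\nu_{1,1,0}$. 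Iterating yields the geometric form $\nu_{0,1,k}=qp_f(\bar{p}p_f)^{k-1}\nu_{1,1,0}$, and summing the tail produces $b(\ubar{\delta})=qp_f/(1-\bar{p}p_f)$. The notational identity $b(\ubar{\delta})=b(1)/(\bar{p}p_f)$ stated in the lemma is a formal convention that preserves the symbols used in Proposition~\ref{proposition:stationary-distribution-threshold}, employing the rule that the empty ``$1\leq k\leq\ubar{\delta}$'' sum contributes zero.

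Next, the synced-state probabilities are pinned down by the balance equations at $(0,0,0)$ and $(1,1,0)$ together with the normalization $\sum_s\nu_s=1$. Because these two balance equations depend on the two branches only through the total inflow back into the synced states, they take the same algebraic form as in Proposition~\ref{proposition:stationary-distribution-threshold} once the updated coefficient $b(\ubar{\delta})$ is substituted, so $\nu_{0,0,0}=\Gamma_0(a,b)$ and $\nu_{1,1,0}=\Gamma_1(a,b)$ as claimed. The average costs then follow by direct summation: the MA contribution $\beta\sum_{k\geq1}k\,\nu_{1,0,k}$ is inherited from Theorem~\ref{theorem:average-cost-threshold} as $\beta p\psi(\bar q,\bar\delta)\nu_{0,0,0}$, while the FA contribution collapses to the single geometric-weighted sum $(1-\beta)qp_f\nu_{1,1,0}\sum_{k\geq1}k(\bar p p_f)^{k-1}=(1-\beta)qp_f\nu_{1,1,0}/(1-\bar{p}p_f)^2$. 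The transmission frequency decomposes as $a(\bar{\delta})\nu_{0,0,0}+(1+b(1))\nu_{1,1,0}$, where the leading ``$1$'' in $(1+b(1))$ accounts for the mandatory transmission in the synced state $(1,1,0)$ itself.

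The main obstacle I expect is not analytical but notational: several quantities defined in Proposition~\ref{proposition:stationary-distribution-threshold} involve a ``$1\leq k\leq \ubar{\delta}$'' sum that is empty when $\ubar{\delta}=0$. Care is needed to verify that, under the convention that empty sums vanish, the compact formulas $\Gamma_0(a,b),\Gamma_1(a,b)$ inherited from Proposition~\ref{proposition:stationary-distribution-threshold} collapse correctly to the expressions asserted in the lemma; once this consistency check is in place, the rest of the proof is routine geometric-series bookkeeping entirely analogous to Appendices~\ref{proof:stationary-distribution-threshold} and~\ref{proof:average-cost-threshold}.
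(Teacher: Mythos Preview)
Your proposal is correct and follows essentially the same approach as the paper, which simply notes that the situation is the mirror of Lemma~\ref{lemma:case01} and omits the proof. The one point worth making explicit in your write-up is the consistency check analogous to \eqref{eq:prov-case01-c}: the balance equation at $(0,0,0)$ now picks up an extra inflow $qp_s\nu_{1,1,0}$ from the mandatory transmission at $(1,1,0)$, so you must verify $qp_s+\bar{p}p_s b(1)=q-pb(1)$ (equivalently $\bar{p}b(\ubar{\delta})=q/(1-\bar{p}p_f)$ under your convention) before invoking $\Gamma_0,\Gamma_1$; this is routine but is the step that actually justifies reusing the Proposition~\ref{proposition:stationary-distribution-threshold} formulas rather than merely a notational consistency issue.
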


\begin{lemma}\label{lemma:case00}
    For a given switching policy $\pi$ with thresholds $\bar{\delta}=\ubar{\delta} =0$, the stationary distribution satisfies
    \begin{subequations}
    \begin{align}
        \nu_{0,0,0} &= \Gamma_0(a, b), \,\,\nu_{1,1,0} = \Gamma_1(a, b),\\
        \nu_{1,0,k} &= pp_f(\bar{q}p_f)^{k-1}\nu_{0,0}, \,k\geq 1, \\
        \nu_{0,1,k} &= qp_f(\bar{p}p_f)^{k-1}\nu_{1,1},\, k\geq 1. 
    \end{align}
    \end{subequations}
    where $a(1) = \frac{pp_f}{1-\bar{q}p_f}, a(\bar{\delta}) =\frac{a(1)}{\bar{q}p_f},b(1) = \frac{qp_f}{1-\bar{p}p_f},b(\ubar{\delta}) =\frac{b(1)}{\bar{p}p_f}$. Furthermore, the average cost is given by
    \begin{align}
        \mathcal{L}(\pi) = \frac{\beta p p_f \nu_{0,0,0}}{(1-\bar{q}p_f)^2} + \frac{(1-\beta) q p_f \nu_{1,1,0}}{(1-\bar{p}p_f)^2} + \lambda.
    \end{align}
\end{lemma}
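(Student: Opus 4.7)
The plan is to observe that when $\bar{\delta}=\ubar{\delta}=0$ the sensor transmits in every slot, so the induced chain uses only the $a=1$ branches of~\eqref{eq:system-dynamics}. From an MA state $(1,0,k)$ the chain either resets to $(0,0,0)$ with probability $q$, jumps to the other synced state $(1,1,0)$ with probability $\bar{q}p_s$, or advances to $(1,0,k+1)$ with probability $\bar{q}p_f$; the FA branch is symmetric. In particular, only $(0,0,0)$ feeds $(1,0,1)$, with probability $pp_f$, and only $(1,1,0)$ feeds $(0,1,1)$, with probability $qp_f$. Structurally this is the chain of Lemma~\ref{lemma:case01} with both ``no-transmission'' phases collapsed, so the proof specializes the earlier arguments; irreducibility and positive recurrence are inherited from Proposition~\ref{proposition:sa-policy}(ii) with $f_{\mathrm{ca}}=1$.

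First I would solve the tail balance equations. The recursion $\nu_{1,0,k}=\bar{q}p_f\,\nu_{1,0,k-1}$ for $k\geq 2$ together with the boundary relation $\nu_{1,0,1}=pp_f\,\nu_{0,0,0}$ immediately yields the closed form $\nu_{1,0,k}=pp_f(\bar{q}p_f)^{k-1}\nu_{0,0,0}$, and the same argument on the FA side produces $\nu_{0,1,k}=qp_f(\bar{p}p_f)^{k-1}\nu_{1,1,0}$. Summing the geometric tails gives $\sum_{k\geq 1}\nu_{1,0,k}=a(1)\nu_{0,0,0}$ and $\sum_{k\geq 1}\nu_{0,1,k}=b(1)\nu_{1,1,0}$ with $a(1),b(1)$ as defined in the lemma. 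Writing the global balance equation at $(0,0,0)$, substituting these two sums, and using the telescoping identities $1-\bar{q}p_f-qp_f=p_s$ and $1-\bar{p}p_f-pp_f=p_s$, the $p_s$ factor cancels and the equation reduces to $\nu_{0,0,0}\,\bar{q}\,a(\bar{\delta})=\nu_{1,1,0}\,\bar{p}\,b(\ubar{\delta})$ with $a(\bar{\delta})=a(1)/(\bar{q}p_f)$ and $b(\ubar{\delta})=b(1)/(\bar{p}p_f)$. Combined with the normalization $\nu_{0,0,0}(1+a(1))+\nu_{1,1,0}(1+b(1))=1$, this yields the announced $\Gamma_0(a,b)$ and $\Gamma_1(a,b)$ expressions verbatim.

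For the average cost, I would apply $\sum_{k\geq 1}kr^{k-1}=(1-r)^{-2}$ to the two geometric tails to obtain $\sum_{k\geq 1}k\,\nu_{1,0,k}=pp_f\nu_{0,0,0}/(1-\bar{q}p_f)^2$ and its FA analogue, which immediately produce the two summands of $C(\pi)$. Since the policy transmits at every slot we have $F(\pi)=1$, and therefore $\mathcal{L}^\lambda(\pi)=C(\pi)+\lambda$ as claimed. The only real obstacle is the algebraic bookkeeping at the synced-state balance equation, specifically verifying that after all cancellations the reduced equation matches the same $\Gamma_0,\Gamma_1$ template of Proposition~\ref{proposition:stationary-distribution-threshold} under the redefined coefficients $a(\bar{\delta})$ and $b(\ubar{\delta})$; once that identification is made, the remainder is a one-line application of the geometric-series derivative.
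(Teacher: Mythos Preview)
Your proposal is correct and takes essentially the same approach as the paper: the paper's proof is the one-liner ``the results follow directly by combining Lemma~\ref{lemma:case01} and Lemma~\ref{lemma:case10}, or equivalently, applying the continuous-transmission policy into Proposition~\ref{proposition:sa-policy},'' and your argument is precisely this specialization written out---you invoke Proposition~\ref{proposition:sa-policy}(ii) with $f_{\mathrm{ca}}=1$ for the structural facts and then redo the balance-equation algebra of Lemma~\ref{lemma:case01} with both no-transmission phases collapsed. The only difference is that you make explicit the bookkeeping the paper leaves implicit.
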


\subsection{Symmetric Source and Threshold Policy}\label{sec:symmetric-sources}

It is natural to ask:
\begin{displayquote}
    When is it optimal to use identical thresholds?
\end{displayquote}
Intuitively, for symmetric sources with equally important states, the sensor has no preference for different estimation errors, and MC2 induced by a switching policy with identical thresholds (i.e., threshold policy) is symmetric. Hence, it is natural to expect that threshold policies suffice in such cases. This intuition is justified in the following theorem.

\begin{assumption}\label{assumption:symmetric}
    The source is symmetric and non-prioritized, i.e., $p=q, \beta=0.5$.
\end{assumption}
\begin{theorem}\label{theorem:AoII-threshold}
    Under Assumption~\ref{assumption:symmetric}, the optimal policy has identical thresholds.
\end{theorem}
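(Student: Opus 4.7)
The plan is to exploit the mirror-symmetry induced by Assumption~\ref{assumption:symmetric}, combine it with the switching-type structure from Theorem~\ref{theorem:structure-lambda-optimal}, and then identify the resulting one-dimensional problem with the AoII process in \eqref{eq:AoII-process}. First, observe that when $p=q$ and $\beta=0.5$ the per-step cost in \eqref{eq:weighted-age} collapses to $c(S_t)=\tfrac{1}{2}\Delta_t$, so it depends on the state only through the age, and the involution $\sigma:(x,\hat{x},\delta)\mapsto (1-x,1-\hat{x},\delta)$ on $\mathcal{S}$ preserves both the cost and the transition kernel \eqref{eq:system-dynamics} for every action. The MDP $(\mathcal{S},\mathcal{A},P,l^\lambda)$ is therefore invariant under $\sigma$, and $\sigma$ maps $\mathcal{S}_{\rm{MA}}$ onto $\mathcal{S}_{\rm{FA}}$ age-by-age while swapping the two synced states.

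Next, I would lift this symmetry to the relative value function. By Proposition~\ref{proposition:existence-of-lambda-policy}, there exists $h^\lambda$ solving the ACOE \eqref{problem:bellman-equation}; substituting $\sigma(s)$ for $s$ and using the $\sigma$-invariance of the cost and kernel shows that $h^\lambda\circ\sigma$ also solves the same ACOE with the same constant $\mathcal{L}^\lambda$. Uniqueness of $h^\lambda$ up to an additive constant, which is guaranteed under the recurrence conditions C1--C3 already verified in Proposition~\ref{proposition:existence-of-lambda-policy}, then yields $h^\lambda(\sigma(s))=h^\lambda(s)$. Consequently, the minimizer set in \eqref{eq:lamba-optimal-policy} is $\sigma$-invariant, so we may choose a $\sigma$-invariant deterministic stationary $\lambda$-optimal policy. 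Combined with Theorem~\ref{theorem:structure-lambda-optimal}, which guarantees a switching-type structure with thresholds $\bar{\delta}$ on $\{(0,0,0)\}\cup\mathcal{S}_{\rm{MA}}$ and $\ubar{\delta}$ on $\{(1,1,0)\}\cup\mathcal{S}_{\rm{FA}}$, the $\sigma$-invariance forces $\bar{\delta}=\ubar{\delta}=:\delta^{\star}$.

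Finally, under the identical-threshold policy, marginalizing the joint kernel \eqref{eq:system-dynamics} over $(X_{t+1},\hat{X}_{t+1})$ produces a transition law for $\Delta_t$ alone that coincides with \eqref{eq:AoII-dynamics}: the symmetry $p=q$ removes any residual dependence of the marginal on the current pair $(X_t,\hat{X}_t)$, so the four state-classes collapse into the two age regimes $\Delta=0$ and $\Delta\geq 1$. The per-step cost $\tfrac{1}{2}\Delta_t$ is already age-only, so the restricted problem is exactly the AoII-MDP in \eqref{eq:AoII-process}--\eqref{eq:AoII-dynamics}, whose threshold-type optimality is classical (cf.\ \cite{Maatouk-TON-AoII-2020}); its solution provides the common threshold $\delta^{\star}$. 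The main obstacle I anticipate is the uniqueness step used to transfer $\sigma$-invariance from the ACOE to $h^\lambda$, because uniqueness for average-cost MDPs with countable state space and unbounded per-step costs is delicate; however, the first-passage-time/cost bounds C1--C3 in Proposition~\ref{proposition:existence-of-lambda-policy} provide the required structure. If one wishes to avoid invoking uniqueness altogether, a clean fallback is to use the closed-form expressions in Theorem~\ref{theorem:average-cost-threshold} to verify directly that $\mathcal{L}^\lambda(\bar{\delta},\ubar{\delta})=\mathcal{L}^\lambda(\ubar{\delta},\bar{\delta})$ under Assumption~\ref{assumption:symmetric} and that the infimum is attained on the diagonal $\bar{\delta}=\ubar{\delta}$.
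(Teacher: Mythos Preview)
Your route is sound but genuinely different from the paper's. The paper proceeds computationally: under $p=q$ the formulas of Theorem~\ref{theorem:average-cost-threshold} give $a(\cdot)=b(\cdot)$, so for $\beta=0.5$ one can write $\mathcal{L}^\lambda(x,y)=(m_x+m_y)/(n_x+n_y)$ with $m_x/n_x=\mathcal{L}^\lambda(x,x)$, and the mediant inequality $\tfrac{m_x+m_y}{n_x+n_y}\ge\min\bigl\{\tfrac{m_x}{n_x},\tfrac{m_y}{n_y}\bigr\}$ shows that the diagonal minimizer $\mathcal{L}^\lambda(x^*)$ already dominates every off-diagonal pair; Lemma~\ref{lemma:AoII-threshold} then identifies $\mathcal{L}^\lambda(x)$ with the AoII cost. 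Your symmetry argument through the involution $\sigma$ and the ACOE is more conceptual and would survive in settings where no closed form is available. Two remarks. First, the uniqueness step you flag can be bypassed: the $h^\lambda$ supplied by Proposition~\ref{proposition:existence-of-lambda-policy} (via \cite[Corollary~7.5.10]{sennott1998stochastic}) is constructed as a vanishing-discount limit, and each discounted value function is unique and therefore $\sigma$-invariant by the symmetry of the discounted Bellman equation, so the limit $h^\lambda$ is automatically $\sigma$-invariant without a separate uniqueness claim for the ACOE. Second, your fallback is incomplete as stated: the symmetry $\mathcal{L}^\lambda(\bar{\delta},\ubar{\delta})=\mathcal{L}^\lambda(\ubar{\delta},\bar{\delta})$ does not by itself pin the infimum to the diagonal---that is precisely where the paper's mediant inequality does the real work.
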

\begin{IEEEproof} 
    We prove the theorem using the analytical results derived in Section~\ref{sec:analysis-switching-policy}. For symmetric sources ($p=q$), the quantities $a(\cdot)$ and $b(\cdot)$ in Proposition~\ref{proposition:stationary-distribution-threshold} are identical. Define $\xi(x) = 1 - \bar{p}^{x-1}$ and $\psi(x) = \psi(\bar{p}, x)$. The average cost of any switching policy $\pi$ with thresholds $x,y\geq 1$ is given by
    \begin{align}
        \mathcal{L}(x,y)
        &\hspace{-0.3em}=\hspace{-0.3em}\frac{\big(\beta p \psi(x) \hspace{-0.2em}+\hspace{-0.2em} \lambda a(x)\big)a(y) \hspace{-0.2em}+\hspace{-0.2em} \big((1\hspace{-0.2em}-\hspace{-0.2em}\beta) p \psi(y) \hspace{-0.2em}+\hspace{-0.2em} \lambda a(y)\big)a(x)}{\big(1 \hspace{-0.2em}+\hspace{-0.2em} \xi(x) \hspace{-0.2em}+\hspace{-0.2em} a(x)\big)a(y) \hspace{-0.2em}+\hspace{-0.2em} \big(1 \hspace{-0.2em}+\hspace{-0.2em} \xi(y) \hspace{-0.2em}+\hspace{-0.2em} a(y)\big)a(x)}\notag\\
        &\hspace{-0.3em}=\hspace{-0.3em}\frac{\frac{\beta p \psi(x) + \lambda a(x)}{a(x)} + \frac{(1-\beta) p \psi(y) + \lambda a(y)}{a(y)}}{\frac{1 + \xi(x) + a(x)}{a(x)} + \frac{1 + \xi(y) + a(y)}{a(y)}}.\notag
    \end{align}
    For threshold policies (i.e., $y = x$), $\mathcal{L}(x,y)$ reduces to
    \begin{align*}
        \mathcal{L}(x) =\frac{0.5 p \psi(x) + \lambda a(x)}{1 + \xi(x) + a(x)}.
    \end{align*}
    Let $m_x = 0.5\, p \psi(x) + \lambda a(x)$ and
    $n_x = 1 + \xi(x) + a(x)$. We may write $\mathcal{L}(x) = m_x / n_x$ and
    \begin{align}
        \mathcal{L}(x,y) = \frac{m_x + m_y}{n_x + n_y} + \frac{(\beta-0.5)\big(\frac{\psi(x)}{a(x)} - \frac{\psi(y)}{a(y)}\big)}{\frac{1 + \xi(x) + a(x)}{a(x)} + \frac{1 + \xi(y) + a(y)}{a(y)}}.\notag
    \end{align}
    When the source is non-prioritized (i.e., $\beta = 0.5$), the above expression reduces to
    \begin{align*}
        \mathcal{L}(x,y) = \frac{m_x + m_y}{n_x + n_y}.
    \end{align*}
    
    Since $n_x, n_y > 0$, we have
    \begin{align*}
        \frac{m_x + m_y}{n_x + n_y} = \frac{\frac{m_x}{n_x}n_x + \frac{m_y}{n_y}n_y}{n_x + n_y} \geq \min\{\frac{m_x}{n_x}, \frac{m_y}{n_y}\},
    \end{align*}
    which implies that
    \begin{align*}
        \mathcal{L}(x, y) \geq \min\{\mathcal{L}(x), \mathcal{L}(y)\}
    \end{align*}
    for all $x, y\geq 1$. Therefore, under Assumption~\ref{assumption:symmetric}, using distinct thresholds cannot achieve a lower average cost than a single-threshold policy, which completes the proof.
\end{IEEEproof}

A similar result derived in~\cite{maatouk2020ToN} shows that the AoII-optimal policy for symmetric sources has a threshold structure. The proof techniques, however, differ from those used here.

\section{Asymptotic Optimality}\label{sec:solution-approach}
It is impractical to search for the optimal thresholds over an infinite state space. For numerical tractability, we truncate the state space and propose a finite-state approximate MDP. The truncated age processes evolve as follows:
\begin{align}
    \Delta_{t+1}^\text{FA}(N) &:= \begin{cases}
        \big(\Delta_t^\text{FA} + 1\big)_N, &X_{t+1} = 0, \hat{X}_{t+1} = 1,\\
        0, &\text{otherwise},
    \end{cases}\\
    \Delta_{t+1}^\text{MA}(N) &:= \begin{cases}
        \big(\Delta_t^\text{MA} + 1\big)_N, &X_{t+1} = 1, \hat{X}_{t+1} = 0,\\
        0, &\text{otherwise},
    \end{cases}
\end{align}
where $\big(x\big)_N = x$ if $x\leq N$ and $\big(x\big)_N = N$ if $x > N$. Thus, the system is confined within 
\begin{align}
    \mathcal{S}_N := \mathcal{S}_\text{synced}\cup \{(1,0,\delta), (0,1,\delta):1\leq \delta\leq N\}.\notag
\end{align}
As depicted in Fig.~\ref{fig:truncated-DTMC}, the boundary states $(1, 0, N)$ and $(0, 1, N)$ absorb the impact of all discarded states.

In general, the truncated MDP might not converge to the original one as $N\rightarrow\infty$\cite{sennott1998stochastic}. Thus, we will be concerned with the performance loss due to state space truncation. We first show the structural results of the truncated MDP. 

\begin{proposition} 
    The optimal policy $\pi_N^*$ for the truncated problem has a switching structure. 
\end{proposition}
\begin{IEEEproof}
    The conditions in the proof of Proposition~\ref{proposition:existence-of-lambda-policy} trivially hold for the truncated finite-state MDP; hence, an optimal policy exists. The switching structure then follows by applying the same proof as Theorem~\ref{theorem:structure-lambda-optimal}.
\end{IEEEproof}

\begin{figure}
    \centering
    \includegraphics[width=\linewidth]{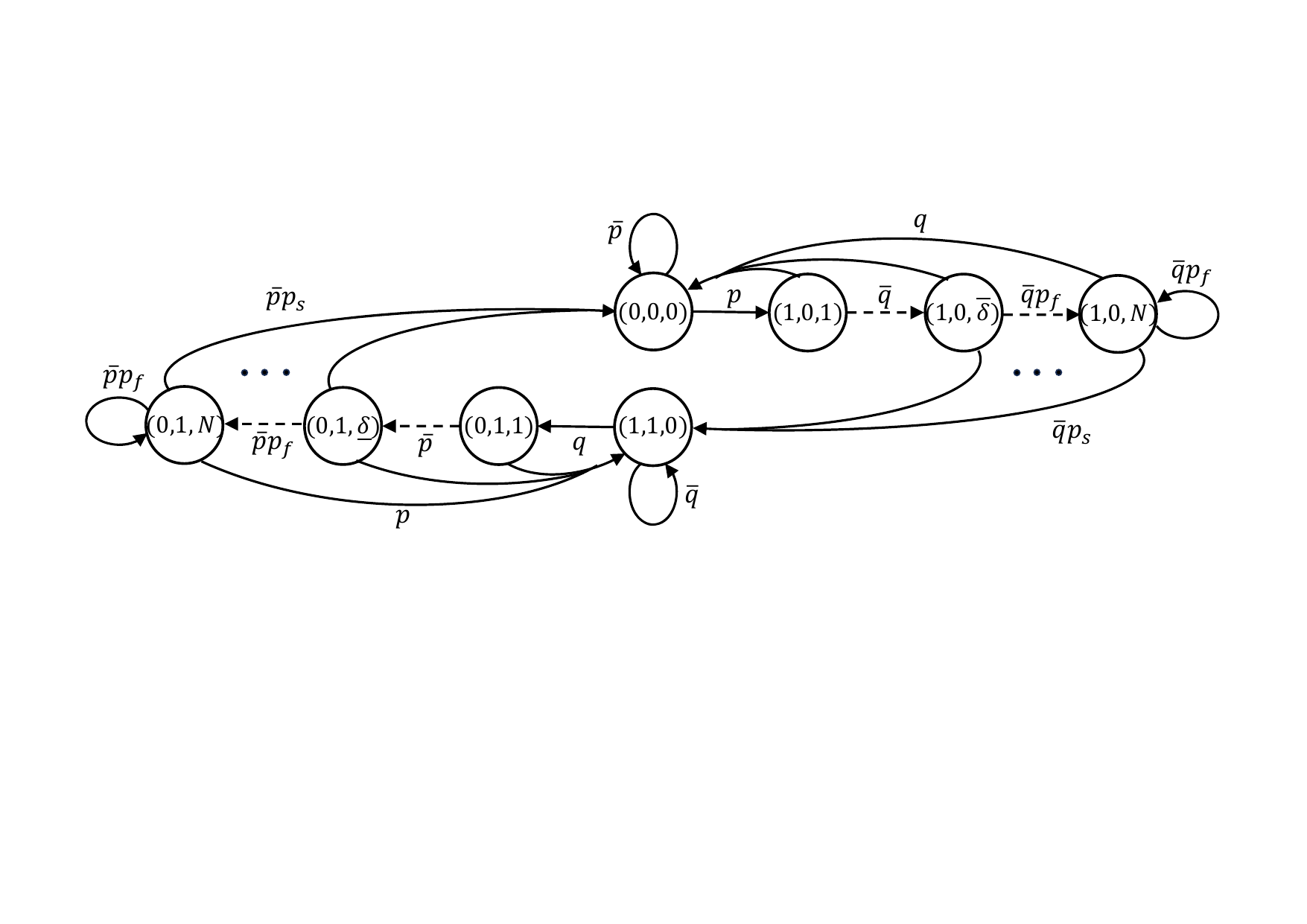}
    \caption{State evolution of the truncated MC2 under the switching policy.}
    \label{fig:truncated-DTMC}
\end{figure}

The next result shows that for any switching policy $\pi$, the average cost of the original problem, $\mathcal{L}(\pi)$, differs from that of the truncated problem, $\mathcal{L}(\pi, N)$, by a constant term $\sigma(\pi, N)$, which diminishes exponentially with $N$.

\begin{proposition}\label{proposition:truncated-MDP-convergence}
    For any a switching policy $\pi$ with thresholds $\bar{\delta},\ubar{\delta}\geq 1$ and $N>\max\{\bar{\delta},\ubar{\delta}\}$, the stationary distribution of the truncated chain, $\nu_{i,j,\delta}(N)$, is given by
    \begin{subequations}
    \begin{align}
        &\nu_{i,j,\delta}(N) = \nu_{i,j,\delta}, \,\forall (i,j,\delta)\in\mathcal{S}_N^{\rm{in}},\\
            &\nu_{1,0,N}(N) = \frac{\bar{q}p_f}{1-\bar{q}p_f} \nu_{1,0,N-1},\\
            &\nu_{0,1,N}(N) = \frac{\bar{p}p_f}{1-\bar{p}p_f}\nu_{0,1,N-1}.
    \end{align}\label{eq:stationary-distribution-truncated}
    \end{subequations}
    Moreover, the average cost of the truncated problem satisfies
    \begin{align}
        \mathcal{L}(\pi,N) = \mathcal{L}(\pi) - \sigma(\pi,N),\label{eq:average-cost-threshold-truncated}
    \end{align}
    where $\sigma(\pi,N)=
        \frac{\beta \nu_{0,0,0}p\bar{q}^N p_f^{N-\bar{\delta}+1}}{(1-\bar{q}p_f)^2} +\frac{(1-\beta)\nu_{1,1,0}q\bar{p}^N p_f^{N-\ubar{\delta}+1}}{(1-\bar{p}p_f)^2}$.
\end{proposition}
\begin{IEEEproof}
    See Appendix~\ref{proof:truncated-MDP-convergence}.
\end{IEEEproof}

Based on the above results, we show that the truncated MDP converges to the original one as the truncation is alleviated. 

\begin{theorem}\label{theorem:truncated-MDP-optimality}
    Let $\mathcal{L}^*(N)$ be the minimal cost of the truncated MDP. Then,  $\lim_{N\rightarrow\infty}\mathcal{L}^*(N)= \mathcal{L}^*$.  
\end{theorem}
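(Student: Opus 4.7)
The plan is to sandwich $\mathcal{L}^\lambda(N)$ between $\mathcal{L}^\lambda$ and $\mathcal{L}^\lambda$ minus two copies of a vanishing error term, using Proposition~\ref{proposition:truncated-MDP-convergence} as a two-way bridge between the truncated and original average costs, and Corollary~\ref{coro:sigma} to kill the residual. Let $\pi_\lambda^*$ denote an original-MDP $\lambda$-optimal switching-type policy (which exists by Proposition~\ref{proposition:existence-of-lambda-policy} and Theorem~\ref{theorem:structure-lambda-optimal}) with finite thresholds $(\bar\delta^*,\ubar\delta^*)$, and let $\pi_\lambda^*(N)$ denote a truncated $\lambda$-optimal switching-type policy with thresholds $(\bar\delta^*(N),\ubar\delta^*(N))\in[0,N]$. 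The key observation is that Corollary~\ref{coro:sigma} supplies a bound on $\sigma(\pi,N)$ that depends only on $N$ and the source parameters $(p_{\rm m},\bar p_{\rm m},p_f)$ — not on the policy's thresholds — so the bound is \emph{uniform} over the entire switching-type class.

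For the upper bound, I would apply $\pi_\lambda^*$ inside the truncated MDP. As soon as $N>\max\{\bar\delta^*,\ubar\delta^*\}$, which happens for all sufficiently large $N$ since the original-MDP thresholds are finite, Proposition~\ref{proposition:truncated-MDP-convergence} gives $\mathcal{L}^\lambda(\pi_\lambda^*,N)=\mathcal{L}^\lambda-\sigma(\pi_\lambda^*,N)$, and the truncated optimality of $\pi_\lambda^*(N)$ yields $\mathcal{L}^\lambda(N)\le \mathcal{L}^\lambda-\sigma(\pi_\lambda^*,N)\le\mathcal{L}^\lambda$. For the lower bound, I would extend $\pi_\lambda^*(N)$ from $\mathcal{S}_N$ to $\mathcal{S}$ by keeping the same threshold form (transmitting at every age $>\bar\delta^*(N)$ or $>\ubar\delta^*(N)$), which yields a well-defined, finite-cost switching-type policy on the original chain. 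Proposition~\ref{proposition:truncated-MDP-convergence} applied in reverse then gives $\mathcal{L}^\lambda(\pi_\lambda^*(N))=\mathcal{L}^\lambda(N)+\sigma(\pi_\lambda^*(N),N)$, and original-MDP optimality of $\pi_\lambda^*$ delivers $\mathcal{L}^\lambda\le\mathcal{L}^\lambda(N)+\sigma(\pi_\lambda^*(N),N)$. Combining, $\mathcal{L}^\lambda-\sigma(\pi_\lambda^*(N),N)\le\mathcal{L}^\lambda(N)\le\mathcal{L}^\lambda$, and invoking the uniform bound in Corollary~\ref{coro:sigma} for both $\sigma$ terms closes the argument with an explicit exponential rate $O(\bar p_{\rm m}^{\,N})$.

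The main obstacle is precisely the $N$-dependence of $\pi_\lambda^*(N)$: pointwise convergence $\sigma(\pi,N)\to 0$ for a \emph{fixed} $\pi$ would not suffice, since the thresholds themselves drift with $N$ and could, a priori, push mass into the tail faster than the truncation cuts it off. This is exactly where the threshold-free bound of Corollary~\ref{coro:sigma} is indispensable, and it is worth stressing in the proof that the bound is uniform over the switching class. A smaller technical nuisance is that Proposition~\ref{proposition:truncated-MDP-convergence} is stated for $\bar\delta,\ubar\delta\ge 1$; if a truncated optimal policy hits the corner cases ($\bar\delta^*(N)=0$, $\ubar\delta^*(N)=0$, or a threshold equal to $N$), I would either redo the telescoping decomposition of the tail using the stationary distributions in Lemmas~\ref{lemma:case01}–\ref{lemma:case00}, or handle the degenerate ``never-transmit-at-boundary'' case separately by noting that the truncated stationary distribution collapses onto $\mathcal{S}_N^{\rm in}$ and the same exponentially small remainder appears. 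These extensions are mechanical and do not affect the sandwich structure.
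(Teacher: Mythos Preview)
Your proposal is correct and follows essentially the same sandwich argument as the paper's own proof: both apply Proposition~\ref{proposition:truncated-MDP-convergence} to $\pi_\lambda^*$ (restricted to $\mathcal{S}_N$) and to $\pi_\lambda^*(N)$ (extended to $\mathcal{S}$), then invoke Corollary~\ref{coro:sigma} on both $\sigma$ residuals. Your write-up is in fact slightly more careful than the paper's, since you explicitly flag the uniformity of the Corollary~\ref{coro:sigma} bound over the switching-type class (needed because $\pi_\lambda^*(N)$ varies with $N$) and the corner cases where a threshold hits $0$ or $N$; the paper glosses over both points.
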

\begin{IEEEproof}
    Let $\pi^*$ and $\pi^*_N$ denote the optimal policies for the original and the truncated MDP. From~\eqref{eq:average-cost-threshold-truncated}, we have
    \begin{align}
        \mathcal{L}(\pi_N^*,N)
        + \sigma(\pi^*,N) &= \mathcal{L}^*(N)
        + \sigma(\pi^*,N) \notag\\
        &\overset{(a)}{\leq} \mathcal{L}(\pi^*,N)+ \sigma(\pi^*,N) = \mathcal{L}^* \notag\\
        &\overset{(b)}{\leq} \mathcal{L}(\pi^*_N,N)+ \sigma(\pi^*_N,N)\notag\\
        & = \mathcal{L}^*(N) + \sigma(\pi^*_N,N),\notag
    \end{align}
    where $(a)$ follows because $\pi^*_N$ solves the truncated MDP, and $(b)$ follows because $\pi^*$ solves the original MDP. It gives
    \begin{align}
        \mathcal{L}^*(N)
        + \sigma(\pi^*,N) \leq
        \mathcal{L}^*
        \leq \mathcal{L}^*(N)+ \sigma(\pi^*_N,N).\notag
    \end{align}
    Tanking a limit as $N\rightarrow\infty$ yields the result.
\end{IEEEproof}

\begin{figure*}[ht]
    \begin{subfigure}{0.325\linewidth}
        \centering
        \includegraphics[width=\linewidth]{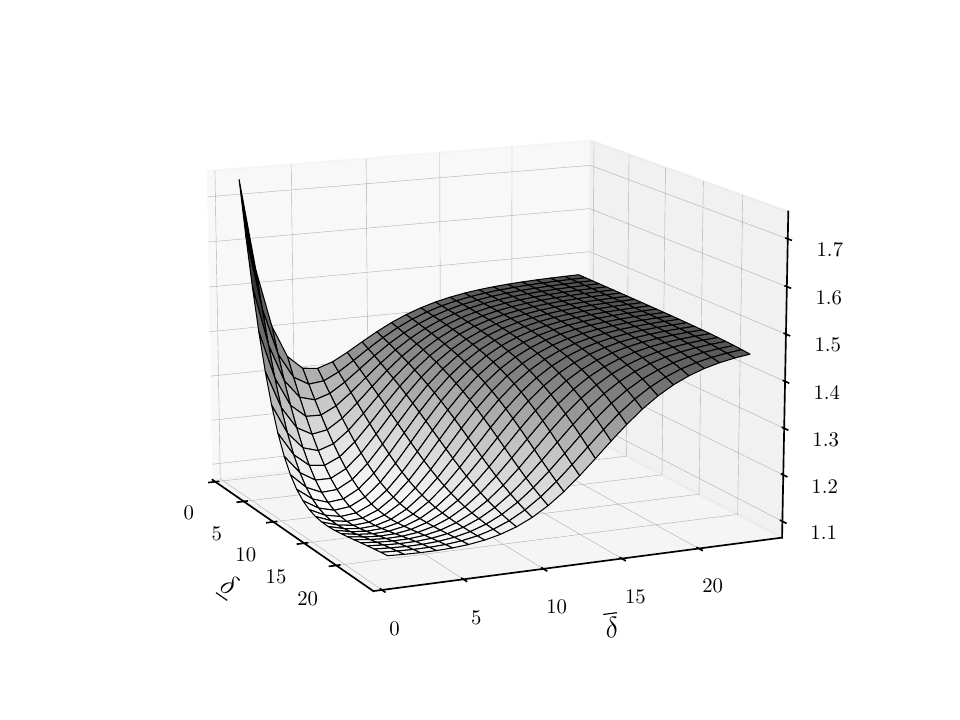}
        \caption{$p=0.2, q=0.3, \beta=0.8$.}
        \label{fig:policy-search-slow}
    \end{subfigure}
    \begin{subfigure}{0.325\linewidth}
        \centering
        \includegraphics[width=\linewidth]{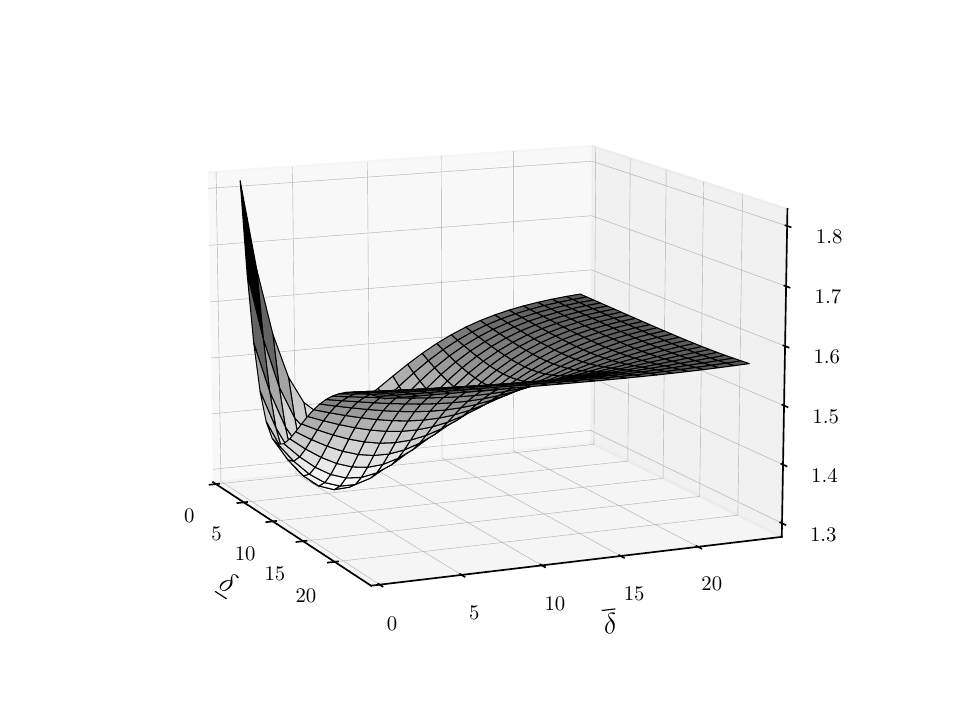}
        \caption{$p=0.25, q=0.25, \beta=0.5$.}
        \label{fig:policy-search-equal}
    \end{subfigure} 
    \begin{subfigure}{0.325\linewidth}
        \centering
        \includegraphics[width=\linewidth]{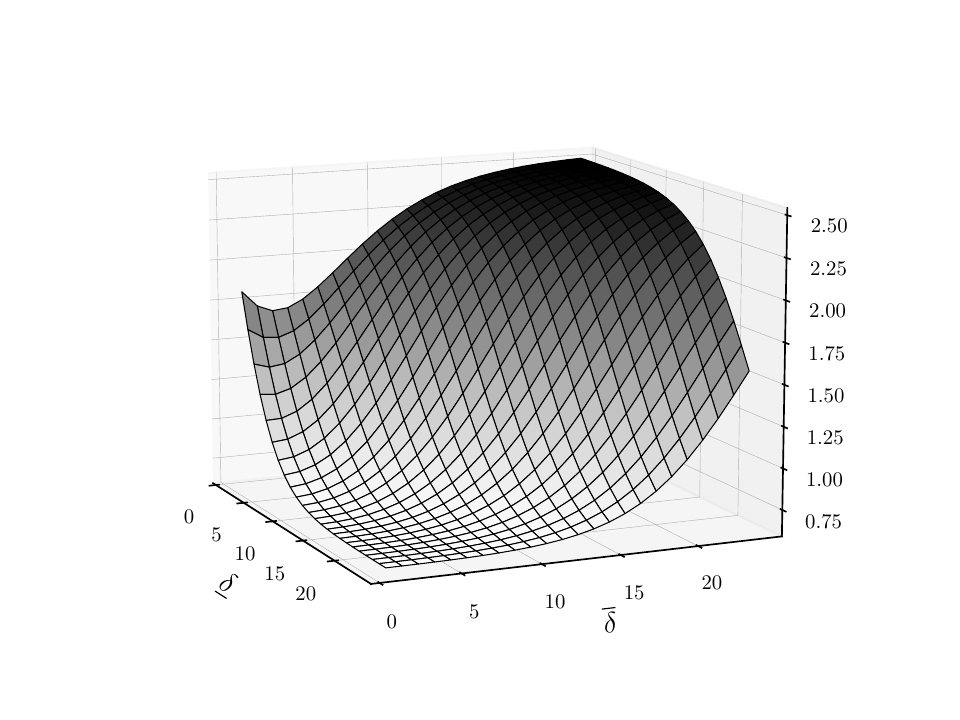}
        \caption{$p=0.25, q=0.25, \beta=0.8$.}
        \label{fig:policy-search-4}
    \end{subfigure} 
    \caption{Estimation cost as a function of the thresholds for $\lambda = 8$ and $p_s = 0.9$. The optimal thresholds $(\bar{\delta}^*, \ubar{\delta}^*)$ are: (a) $(3, 13)$, (b) $(5, 5)$, and (c) $(1, 24)$.}
    \label{fig:policy-search}
\end{figure*}

The optimal thresholds for the truncated MDP can be obtained by finding the minima of the cost function $\mathcal{L}(\pi, N)$ in the policy domain $\{\pi = (\bar{\delta}, \ubar{\delta}) : 0 \leq \bar{\delta}, \ubar{\delta} \leq N\}$. One may also apply the relative value iteration (RVI)~\cite{luo2024semantic} in the state domain $\mathcal{S}_N$; however, at the cost of high complexity. Since the impact of state-space truncation vanishes for sufficiently large $N$, this entails a trade-off between optimality and complexity.

Based on the following result, we propose an efficient search algorithm, detailed in Algorithm~\ref{alg:full}.

\begin{proposition}\label{theorem:slice-method}
    For any given AoFA threshold $\ubar{\delta}\geq 1$, there exists some finite threshold $x^*_{\ubar{\delta}}\geq 1$ such that the average cost function $\mathcal{L}(x, \ubar{\delta})$ is increasing for the all $x\geq x^*_{\ubar{\delta}}$. Similarly, for any given AoMA threshold $\bar{\delta}\geq 1$, there exists some $y^*_{\bar{\delta}} \geq 1$ such that $\mathcal{L}(\bar{\delta}, y)$ is an increasing function for all $y\geq y^*_{\bar{\delta}}$.
\end{proposition}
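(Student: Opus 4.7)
The plan is to prove the first assertion (the claim about $x$) by showing that $g(x) := \mathcal{L}^\lambda(x, \ubar{\delta})$ converges to a finite limit $L_\infty$ as $x \to \infty$, and that the gap $L_\infty - g(x)$ has the asymptotic form $C_*\, x\bar{q}^{x-1}(1+o(1))$ with $C_* > 0$. Because $x\bar{q}^{x-1}$ is strictly decreasing once $x > \bar{q}/q$, this forces $g(x)$ to be eventually strictly increasing, yielding the existence of $x^*_{\ubar{\delta}}$. The second assertion (about $y$) then follows by the symmetric argument obtained by swapping the roles of the MA and FA sides throughout.

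Starting from Theorem~\ref{theorem:average-cost-threshold} and Proposition~\ref{proposition:stationary-distribution-threshold}, I would first write
\[
g(x) = \frac{c\, A(x) + \bar{q}\, B\, a(x)}{c\, U(x) + \bar{q}\, V\, a(x)},
\]
where $c = \bar{p}\,b(\ubar{\delta})$, $V = 1 + b(1)$, and $B = (1-\beta)q\,\psi(\bar{p},\ubar{\delta}) + \lambda b(\ubar{\delta})$ are independent of $x$, while $A(x) = \beta p\,\psi(\bar{q},x) + \lambda a(x)$ and $U(x) = 1 + a(1)(x)$ depend on $x$ through $a(x) = p\bar{q}^{x-1}/(1-\bar{q}p_f)$ and $a(1)(x) = p(1-\bar{q}^{x-1})/q + a(x)$. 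Direct inspection of these formulas yields $a(x) \to 0$ and $a(1)(x) \to p/q$ as $x \to \infty$; collecting the two rational summands defining $\psi(\bar{q},x)$ gives $\psi(\bar{q},x) \to 1/q^2$. Hence $g(x) \to L_\infty := \beta p /(q(p+q))$, a finite positive constant (that is in fact independent of $\ubar{\delta}$, since $\nu_{1,1,0} \to 0$).

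Next, I would expand $L_\infty - g(x)$ to leading order. The main input is the expansion
\[
\psi(\bar{q},x) - \tfrac{1}{q^2} = -\tfrac{\bar{q}p_s}{q(1-\bar{q}p_f)}\, x\bar{q}^{x-1} + O(\bar{q}^{x-1}),
\]
obtained by differencing the two rational summands of $\psi$; the $x\bar{q}^{x-1}$ coefficient does not vanish precisely because $p_s > 0$. In contrast, $a(1)(x) - p/q$, $a(x)$, and $\bar{q}a(x)[B - L_\infty V]$ are all of pure order $\bar{q}^{x-1}$ (no $x\bar{q}^{x-1}$ term). Substituting into the numerator of $g(x) - L_\infty$ and using the identity $A_\infty = L_\infty U_\infty$ with $U_\infty = (p+q)/q$, the only surviving $x\bar{q}^{x-1}$ contribution comes from $c\cdot\beta p\,[\psi(\bar{q},x) - 1/q^2]$, yielding
\[
L_\infty - g(x) = C_*\, x\bar{q}^{x-1}\bigl(1+o(1)\bigr), \qquad C_* = \frac{\beta p\,\bar{q}p_s}{U_\infty\, q(1-\bar{q}p_f)} > 0.
\]

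Finally, since $(x+1)\bar{q}^x/(x\bar{q}^{x-1}) = \bar{q}(1+1/x) \to \bar{q} < 1$, the ratio $[L_\infty - g(x+1)]/[L_\infty - g(x)]$ tends to $\bar{q} < 1$; hence $L_\infty - g(x)$ is strictly decreasing for all $x$ past some threshold $x^*_{\ubar{\delta}}$, and equivalently $g(x)$ is strictly increasing there. The principal technical obstacle is verifying that no cancellation eliminates the leading $x\bar{q}^{x-1}$ coefficient in $L_\infty - g(x)$: I must confirm that the $L_\infty[a(1)(x) - p/q]$ piece and the $\bar{q}a(x)[B - L_\infty V]$ piece are both only of order $\bar{q}^{x-1}$, so that the negative $x\bar{q}^{x-1}$ contribution from $\psi$ is not erased. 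This is straightforward algebra given the explicit formulas. The second claim follows by swapping $(p, \bar{\delta}, \bar{q}, a, A, U) \leftrightarrow (q, \ubar{\delta}, \bar{p}, b, B, V)$ throughout.
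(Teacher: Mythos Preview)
Your proposal is correct and establishes the proposition, but it proceeds quite differently from the paper's own proof.

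The paper treats $x$ as a real variable and computes the derivative of $\mathcal{L}^\lambda(x,\ubar{\delta})=g(x)/h(x)$ directly. Writing out $a'(x)$, $\zeta'(x)$, and $\psi'(x)$ explicitly, it obtains $g'(x)=(\alpha_0+\alpha_1 x)\bar{q}^{x-1}$ and $h'(x)=\alpha_2\bar{q}^{x-1}$ with closed-form constants satisfying $\alpha_1>0$ and $\alpha_2<0$; the quotient-rule numerator $(\alpha_0+\alpha_1 x)h(x)-\alpha_2 g(x)$ is then visibly positive once $x\geq -\alpha_0/\alpha_1$, which yields the \emph{explicit} threshold $x^*_{\ubar{\delta}}=\max\{1,-\alpha_0/\alpha_1\}$. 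Your argument instead identifies the limit $L_\infty=\beta p/(q(p+q))$, isolates the leading $x\bar{q}^{x-1}$ term in $L_\infty-g(x)$ coming solely from $\psi(\bar{q},x)-1/q^2$, and concludes eventual monotonicity from the fact that the ratio $[L_\infty-g(x+1)]/[L_\infty-g(x)]\to\bar{q}<1$.

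What this buys and costs: your asymptotic route is lighter computationally and makes transparent \emph{why} the function must eventually increase (the dominant correction to the limit is a positive, decaying term), whereas the paper's derivative approach gives a concrete formula for $x^*_{\ubar{\delta}}$. That explicit formula is actually used downstream in Step~1 of the search algorithm (the paper points to ``the derivation of $x^*_{\ubar{\delta}}$ and $y^*_{\bar{\delta}}$ therein''), so while your argument suffices for the proposition as stated, it does not deliver the constructive output the algorithm needs. A minor further point: your ratio test yields strict increase on the integers, which is adequate for the threshold search; the paper's derivative argument gives monotonicity for real $x$.
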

\begin{IEEEproof}
    See Appendix~\ref{proof:slice-method} for derivations of $x^*_{\ubar{\delta}}$ and $y^*_{\bar{\delta}}$.
\end{IEEEproof}

Algorithm~\ref{alg:full} has a time complexity of $\mathcal{O}(N^2)$, which is significantly lower than the $\mathcal{O}(|\mathcal{S}_N|^2|\mathcal{A}|I_\text{RVI})$ complexity of RVI, where $I_\text{RVI}\gg 1$ is the number of iterations required for RVI to converge.

\begin{algorithm}[ht]
\renewcommand{\thealgocf}{1}
\DontPrintSemicolon
\SetAlgoLined
\algsetup{linenosize=\small}
Choose a sufficiently large $N$ such that $\sigma(\pi, N) < \epsilon$.\; 
Compute $x^*_{\ubar{\delta}}$ and $y^*_{\bar{\delta}}$ for all $1\leq \ubar{\delta}, \bar{\delta} \leq N$. Let $\Pi_N = \{(x^*_{\ubar{\delta}},\ubar{\delta}):1\leq \ubar{\delta}\leq N\}\cap \{(\bar{\delta},y^*_{\bar{\delta}}):1\leq \bar{\delta}\leq N\}$.\;
Compute $\mathcal{L}(\pi, N)$ for all $\pi \in \Pi_N$. Find the optimal thresholds by $(\bar{\delta}^*, \ubar{\delta}^*) = \argmin_{\pi\in \Pi_N} \mathcal{L}(\pi, N)$.
\caption{The proposed policy search algorithm.}
\label{alg:full}
\end{algorithm}

\begin{figure*}[ht]
    \begin{subfigure}{0.33\linewidth}
        \centering
        \includegraphics[width=\linewidth]{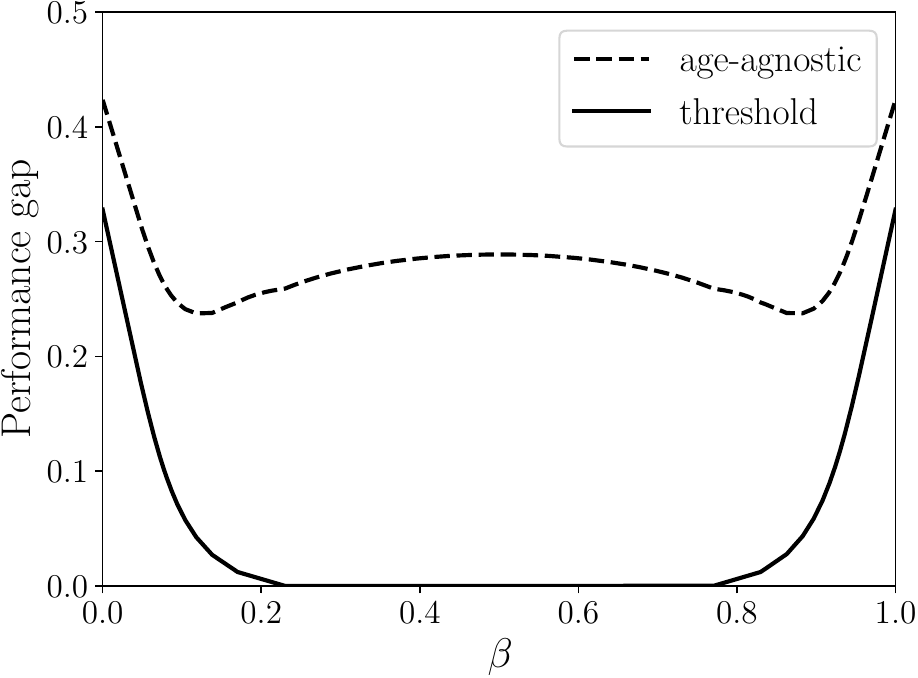}
        \caption{performance gap}
        \label{fig:performance-gap}
    \end{subfigure} 
    \begin{subfigure}{0.33\linewidth}
        \centering
        \includegraphics[width=\linewidth]{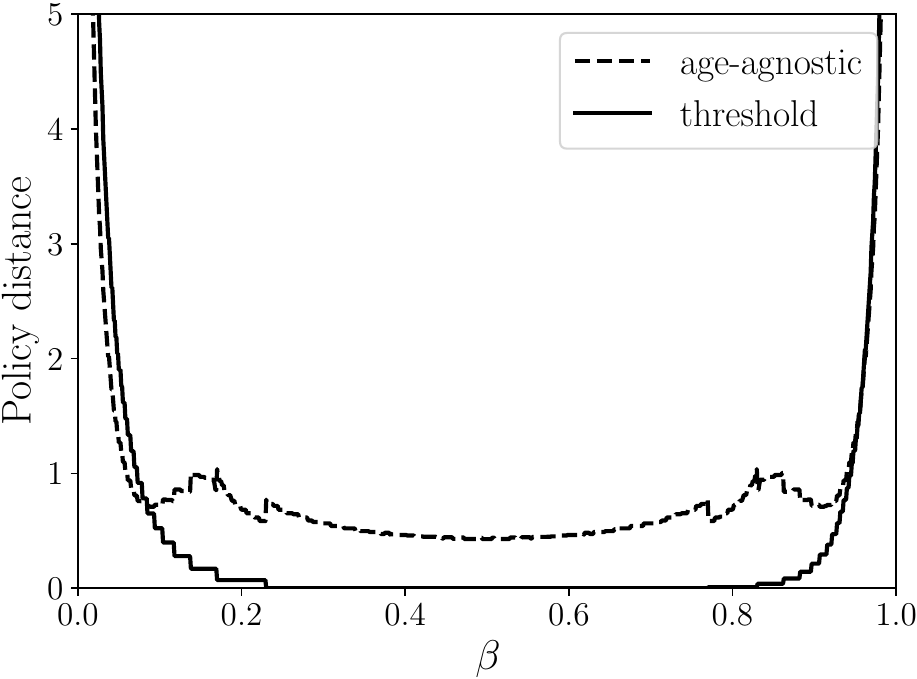}
        \caption{policy distance}
        \label{fig:performance-distance}
    \end{subfigure} 
    \begin{subfigure}{0.33\linewidth}
        \centering
        \includegraphics[width=\linewidth]{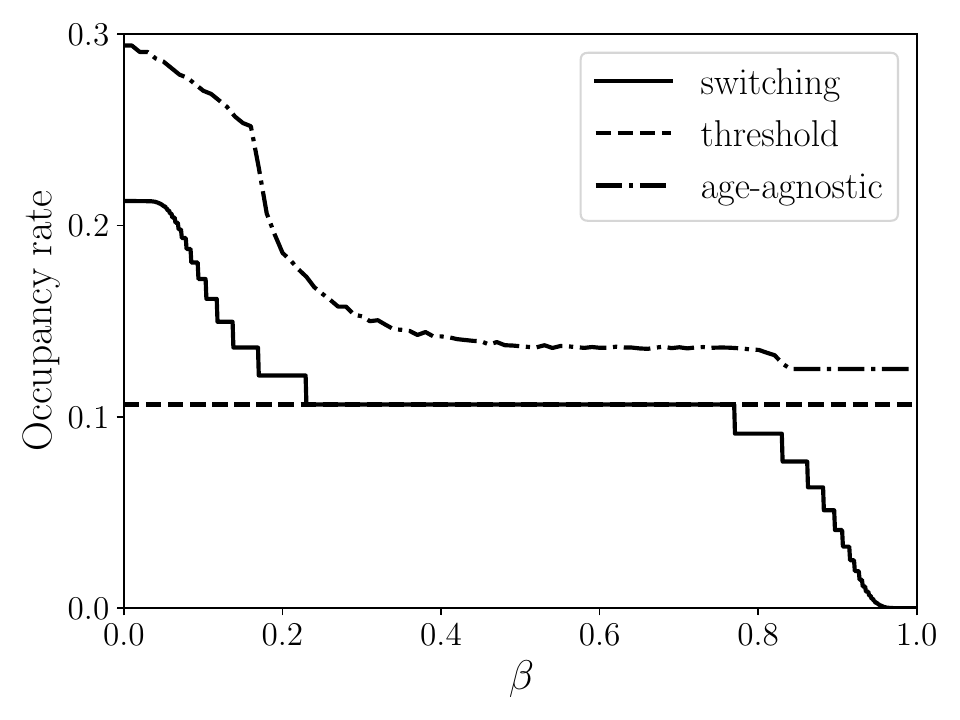}
        \caption{occupancy rate of MA errors}
        \label{fig:occupancy-rate}
    \end{subfigure}
    \caption{Performance comparison of different policies for symmetric sources with $p = q = 0.25$, $p_s = 0.9$, $\lambda = 1$, and $N = 100$.}
    \label{fig:performance-comparison}
\end{figure*}

\section{Numerical Results}\label{sec:numerical-results}
We now present numerical results on the optimal policy.

Fig.~\ref{fig:policy-search} shows the average cost $\mathcal{L}(\bar{\delta}, \ubar{\delta})$ as a function of AoMA threshold $\bar{\delta}$ and AoFA threshold $\ubar{\delta}$. Fig.~\ref{fig:policy-search-slow} demonstrates that for an asymmetric source with prioritized states, the low-cost region is where $\bar{\delta}$ is small and $\ubar{\delta}$ is large. This implies that the alarm state is significantly more important, thus forcing the sensor to transmit more frequently in MA errors while filtering out transient FA errors. Fig.~\ref{fig:policy-search-equal} illustrates a special case where the source is symmetric and the states are equally important. Here, the sensor has no preference for different types of errors; consequently, the cost surface is symmetric, and the optimal policy has identical thresholds. Fig.~\ref{fig:policy-search-4} considers a scenario in which the source is symmetric, but the alarm state has greater importance.  In this case, the threshold policy performs poorly, and the optimal strategy is to transmit in all MA errors and in sustained FA errors that exceed a large age threshold.

\begin{figure*}[ht]
    \begin{subfigure}{0.33\linewidth}
        \centering
        \includegraphics[width=\linewidth]{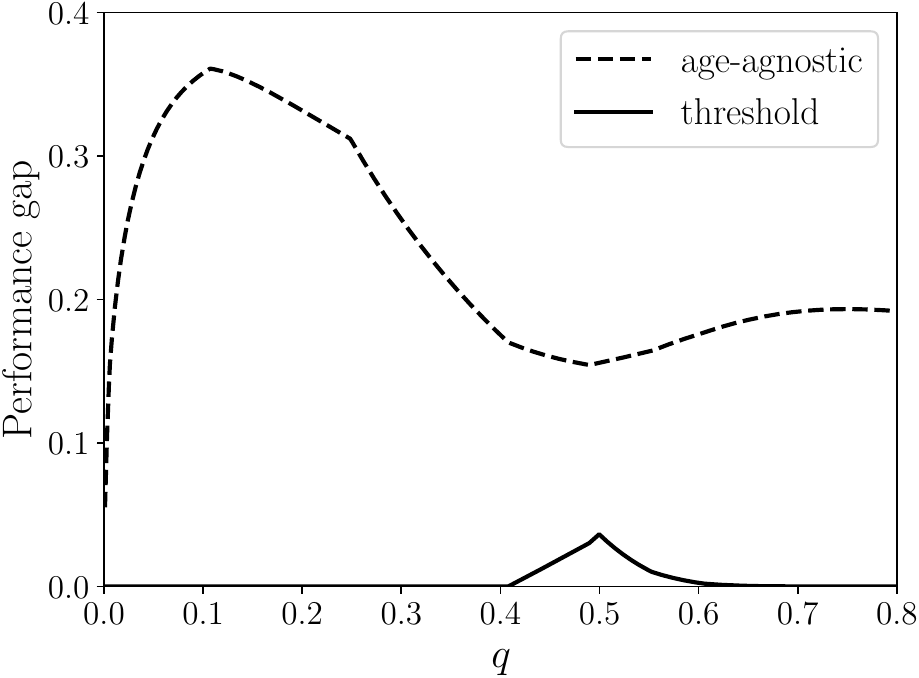}
        \caption{performance gap ($p =0.2, \lambda=1$)}
        \label{fig:performance-gap-q}
    \end{subfigure} 
    \begin{subfigure}{0.33\linewidth}
        \centering
        \includegraphics[width=\linewidth]{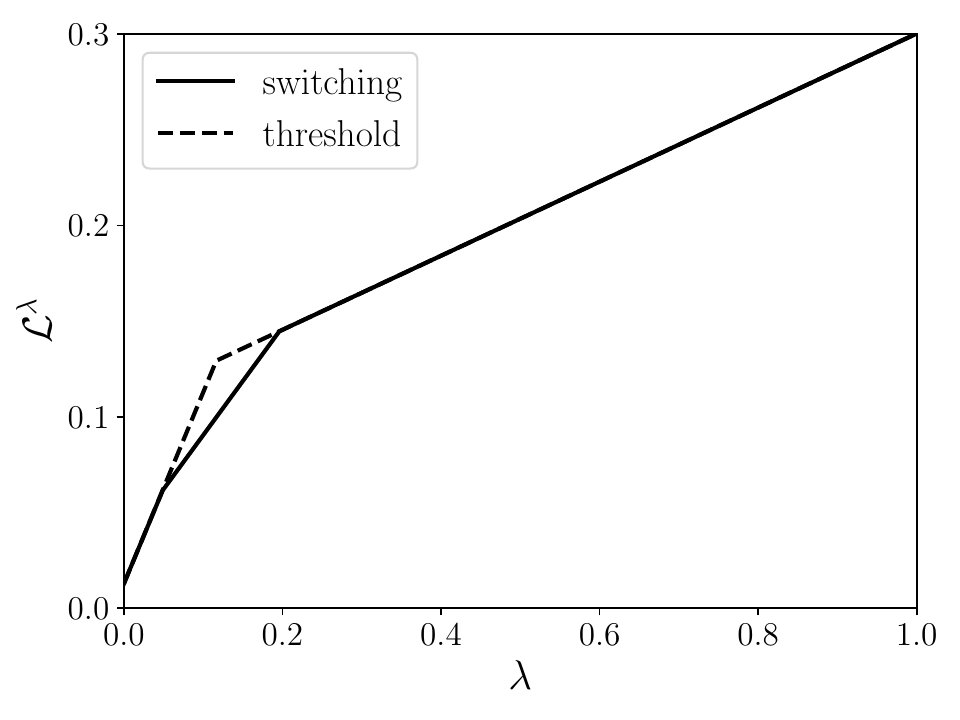}
        \caption{average cost ($p =0.2, q = 0.25$)}
        \label{fig:average-cost-lam}
    \end{subfigure} 
    \begin{subfigure}{0.33\linewidth}
        \centering
        \includegraphics[width=\linewidth]{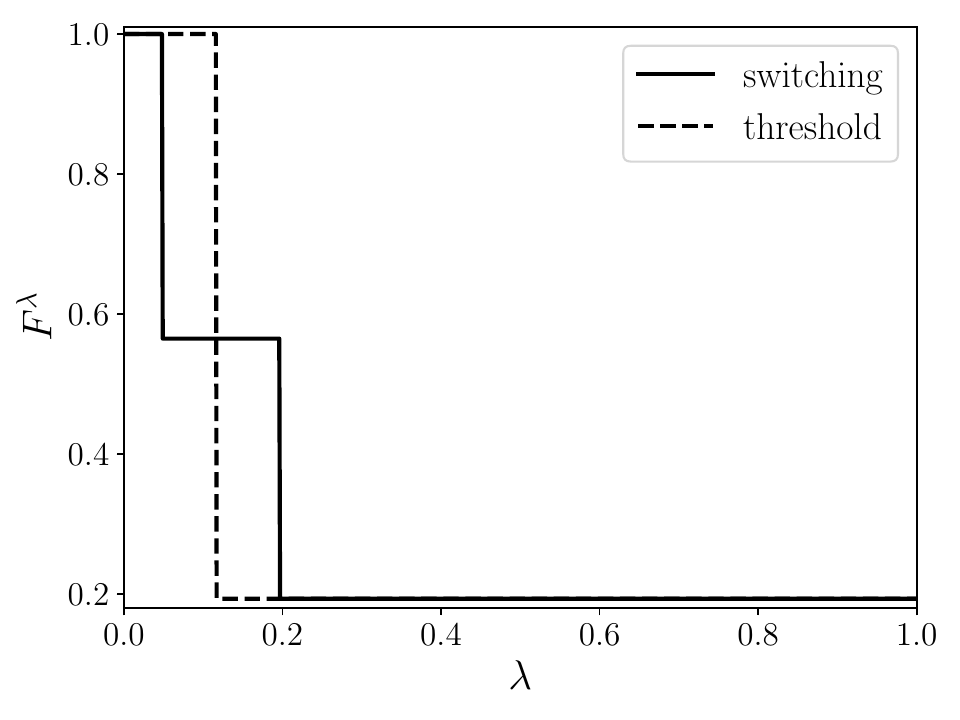}
        \caption{transmission frequency ($p =0.2, q = 0.25$)}
        \label{fig:transfrequency-lam}
    \end{subfigure} 
    \caption{Performance comparison of different policies for non-prioritized sources with $\beta=0.5$, $p_s = 0.9$, and $N = 100$.}
    \label{fig:performance-comparison-2}
\end{figure*}

Fig.~\ref{fig:performance-comparison} compares the performance achieved by the optimal switching, threshold, and age-agnostic policies for a symmetric source. The latter two policies rely on insufficient statistics: the threshold policy depends solely on the age processes and treats all source states equally, whereas the age-agnostic policy accounts for state importance but disregards the lasting impact of consecutive errors. Fig.~\ref{fig:performance-gap} and Fig.~\ref{fig:performance-distance} illustrate the performance losses incurred by these two suboptimal policies. The threshold policy achieves comparable performance and exhibits similar behavior when $\beta$ is close to $0.5$. In contrast, its performance deteriorates rapidly as the source becomes more prioritized. The age-agnostic policy, however, shows a large performance gap and markedly different behavior even when the source is non-prioritized.

Fig.~\ref{fig:occupancy-rate} compares the occupancy rate of MA errors under different policies. A higher occupancy rate indicates a larger proportion of time spent in MA errors. The threshold policy remains unchanged for different MA significance levels $\beta$. In contrast, the age-agnostic and switching policies consistently adapt their behaviors according to state importance: as $\beta$ increases, the sensor transmits more frequently during MA errors, thereby reducing the occupancy rate. However, the age-agnostic policy often transmits at inappropriate times: it tends to overreact in FA errors when $\beta$ is small and becomes overly conservative in MA errors when $\beta$ is large. CAE-optimal policy (not plotted) triggers transmissions only in response to urgent errors: for $\beta>0.5$ (or $\beta<0.5$), transmissions occur exclusively during MA (FA) errors, which can result in significant lasting impacts in the less important error.

Fig.~\ref{fig:performance-comparison-2} compares the performance of different policies for a non-prioritized source. Fig.~\ref{fig:performance-gap-q} shows that the threshold policy achieves near-optimal performance except within a narrow range of $q$ values for a given $p$, whereas the age-agnostic policy incurs a significant performance gap. Figs.~\ref{fig:average-cost-lam} and \ref{fig:transfrequency-lam} further illustrate the structure of the optimal policy: (i) always transmit when $\lambda < 0.05$; (ii) transmit in the synced state $(0,0,0)$ when $\lambda \in [0.05, 0.20)$; and (iii) transmit only when the AoMA or AoFA exceeds a positive threshold when $\lambda > 0.20$. By contrast, the threshold policy offers only two options: continuous transmission when $\lambda < 0.12$, or completely ignoring the synced states otherwise.

\section{Conclusion}\label{sec:conclusion}
This paper studied semantics-aware remote estimation of Markov sources. To measure data significance, we assigned different costs and age variables (i.e., AoMA and AoFA) to different types of estimation errors. The main results were: Theorem~\ref{theorem:structure-lambda-optimal} proved the existence of an optimal switching policy. Theorem~\ref{theorem:average-cost-threshold} provided analytical results for switching policies. Theorem~\ref{theorem:AoII-threshold} justified the intuition that the optimal policy has identical thresholds when the source is symmetric and the states are equally important. Theorem~\ref{theorem:truncated-MDP-optimality} demonstrated the asymptotic optimality of the truncated MDP. Furthermore, an efficient search algorithm was proposed to compute the optimal policy. Numerical results illustrate the effectiveness of incorporating data significance into the estimation process.

\appendices

\renewcommand{\theequation}{A\arabic{equation}}
\setcounter{equation}{0} 

\section{Proof of Proposition~\ref{proposition:sa-policy}}\label{proof:sa-policy}
\textit{Part (i):} As depicted in Fig.~\ref{fig:state_DTMC}, every state in MC1 has a positive self-transition probability (i.e., aperiodicity) and is reachable from any other state (i.e., recurrence). Therefore, MC1 is ergodic and admits a unique stationary distribution~\cite[Ch.~4]{gallager1997discrete} that satisfies the following balance equations: 
\begin{align}
    \begin{cases}
        \nu_{0,0} = \bar{p}\nu_{0,0} + \bar{p}f^0_\alpha\nu_{0, 1} + q\nu_{1,0} + qf^1_\alpha\nu_{1,1},\\
    \nu_{0,1} = \bar{p}\bar{f}^0_\alpha \nu_{0, 1} + q\bar{f}^1_\alpha \nu_{1,1},\\
    \nu_{1,0} = p\bar{f}^0_\alpha \nu_{0,0} + \bar{q}\bar{f}^1_\alpha \nu_{1,0},\\
    \nu_{1,1} = pf^0_\alpha\nu_{0,0} + p\nu_{0, 1} + \bar{q}f^1_\alpha\nu_{1,0} + \bar{q}\nu_{1,1},\\
    \textstyle\sum\nolimits_{i,j}\nu_{i,j} = 1.
    \end{cases}\label{eq:prove-A-1}
\end{align}
The results in~\eqref{eq:stationary-distribution-sa} are the solution to the linear system~\eqref{eq:prove-A-1}.

\textit{Part (ii):} As shown in Fig.~\ref{fig:joint_status_DTMC}, MC2 has no isolated groups (i.e., transitions exist for all pairs of states), and $(0,0,0)$ is a positive recurrent state. Therefore, MC2 is irreducible and all states are positive recurrent\cite[Ch.~5]{gallager1997discrete}. The probabilities in~\eqref{eq:sa-stationary} are obtained by multiplying the transition probabilities along the paths from $(0,0,0)$ to $(1,0,k)$ and from $(1,1,0)$ to $(0,1,k)$. Since MC2 is stationary, the average cost incurred by the age-agnostic policy $\pi$ equals the expected total cost of being in the states. Hence, the average cost is computed as
\begin{align}
    &C(\pi) = \beta\sum_{k=1}^{\infty}k\nu_{1,0,k}
    + (1-\beta)\sum_{k=1}^{\infty}k\nu_{0,1,k}\notag\\
    &= \beta\sum_{k=1}^{\infty}k p\bar{f}^0_\alpha(\bar{q}\bar{f}^1_\alpha)^{k-1}\nu_{0,0}+(1-\beta)\sum_{k=1}^{\infty}k q\bar{f}^1_\alpha(\bar{p}\bar{f}^0_\alpha)^{k-1}\nu_{1,1}.\notag
\end{align}
Given that $0<\bar{f}^0_\alpha, \bar{f}^1_\alpha<1$, the above summations converge to a finite constant, i.e.,
\begin{align}
    C(\pi) = \frac{\beta p\bar{f}^0_\alpha\nu_{0,0}}{(1-\bar{q}\bar{f}^1_\alpha)^2} + \frac{(1-\beta)q\bar{f}^1_\alpha\nu_{1,1}}{(1-\bar{p}\bar{f}^0_\alpha)^2} < \infty.
\end{align}
The transmission frequency is given by
\begin{align}
    F(\pi) = \sum_{i}f_i \nu_i = \frac{qf_0+pf_1}{p+q}.
\end{align}
Then, the total average cost is obtained as
\begin{align}
    \mathcal{L}(\pi) &= C(\pi) + \lambda F(\pi)\notag\\
    &= \frac{\beta p\bar{f}^0_\alpha\nu_{0,0}}{(1-\bar{q}\bar{f}^1_\alpha)^2} + \frac{(1-\beta)q\bar{f}^1_\alpha\nu_{1,1}}{(1-\bar{p}\bar{f}^0_\alpha)^2} + \lambda \frac{qf_0+pf_1}{p+q},\label{eq:prove-A-2}
\end{align}
which yields the result in~\eqref{eq:sa-cost}.

\textit{Part (iii):} Let $z = (0,0,0)$. For brevity, we only verify MA errors, i.e., $i = (1, 0, \delta), \forall\delta\geq 1$. The derivation for other states follows analogously. There are three possible routes from state $i$ to $z$, as illustrated in Fig.~\ref{fig:possible-routes}. For route (a), the expected first passage time from state $i$ to $z$ is
\begin{align}
    \bar{T}_{i,z} = \sum_{t = 1}^{\infty}t  ({\bar{q}\bar{f}^1_\alpha})^{t-1}  q^1 = \frac{q}{(1-\bar{q}\bar{f}^1_\alpha)^2}<\infty.\label{eq:prove-A-3}
\end{align}
For route (b), the system may stay in $(1, 1)$ and $(0,1)$ for some time or alternate between them. Let 
\begin{align*}
    \kappa = \max\{\bar{q}\bar{f}^1_\alpha, \bar{q}f^1_\alpha, \bar{q}, q\bar{f}^0_\alpha, p, \bar{p}\bar{f}^0_\alpha\} < 1
\end{align*}
be the largest transition probability along route (a). The expected first passage time satisfies
\begin{align}
    \bar{T}_{i,z} < \sum_{t=1}^{\infty}t
     \kappa^{t-1} (\bar{p}f^0_\alpha)^1 = \frac{\bar{p}f^0_\alpha}{(1-\kappa)^2} <\infty. \label{eq:prove-A-4}
\end{align}
Similarly, one can show that the first passage time for route (c) is also finite. This proves part (iii).

\textit{Part (iv):} The first passage cost from $i$ to $z$ is the accumulated expected cost along the routes. For route (a), we have
\begin{align}
    \bar{M}_{i, z} &= \sum_{t=0}^{\infty}(\bar{q}\bar{f}^1_\alpha)^{t}\beta(\delta+t) =\beta\delta\sum_{t=0}^{\infty}(\bar{q}\bar{f}^1_\alpha)^{t} + \beta \sum_{t=0}^{\infty}t(\bar{q}\bar{f}^1_\alpha)^{t} \notag\\
    &= \frac{\beta \delta}{1-\bar{q}\bar{f}^1_\alpha}+\frac{\beta \bar{q}\bar{f}^1_\alpha}{(1-\bar{q}\bar{f}^1_\alpha)^2}<\infty.\label{eq:prove-A-5}
\end{align}
Similarly, one can show that the costs associated with routes (b) and (c) are finite. This establishes part (iv).

\begin{figure}[t]
    \centering
    \scalebox{1}{\begin{tikzpicture}[node distance=1.6cm]
\tikzset{
  state/.style={circle, draw, minimum size=0.25cm, inner sep=0pt, font=\scriptsize, line width=0.25mm},
  loop above/.style={->, loop, looseness=4, out=120, in=60, line width=0.2mm},
  every edge/.style={draw, ->, line width=0.2mm}
}

\node (labela) at (-0.8,0) {(a)};
\node[state] (a10) at (0,0) {$(1,0)$};
\node[state] (a00) [right of=a10] {$(0,0)$};
\path (a10) edge[loop above] (a10)
      (a10) edge (a00);

\node (labelb) at (-0.8,-1.2) {(b)};
\node[state] (b10) at (0,-1.2) {$(1,0)$};
\node[state] (b11) [right of=b10] {$(1,1)$};
\node[state] (b01) [right of=b11] {$(0,1)$};
\node[state] (b00) [right of=b01] {$(0,0)$};
\path (b10) edge[loop above] (b10)
      (b11) edge[loop above] (b11)
      (b01) edge[loop above] (b01)
      (b10) edge[->] (b11)
      (b11) edge[<->] (b01)
      (b01) edge[->] (b00);

\node (labelc) at (-0.8,-2.4) {(c)};
\node[state] (c10) at (0,-2.4) {$(1,0)$};
\node[state] (c11) [right of=c10] {$(1,1)$};
\node[state] (c01) [right of=c11] {$(0,1)$};
\node[state] (c11b) [right of=c01] {$(1,1)$};
\node[state] (c00) [right of=c11b] {$(0,0)$};
\path (c10) edge (c11)
      (c11) edge[<->] (c01)
      (c01) edge[<->] (c11b)
      (c11b) edge (c00)
      (c10) edge[loop above] (c10)
      (c11) edge[loop above] (c11)
      (c01) edge[loop above] (c01)
      (c11b) edge[loop above] (c11b);
\end{tikzpicture}}
    \caption{Possible routes from state $(1,0,\delta)$ to $(0,0,0)$.}
    \label{fig:possible-routes}
\end{figure}

\renewcommand{\theequation}{B\arabic{equation}}
\setcounter{equation}{0} 

\section{Proof of Theorem~\ref{theorem:structure-lambda-optimal}}\label{poof:structure-lambda-optimal}
It suffices to show that the optimal policy is monotonically nondecreasing in each age process. In the following proof, we show the monotonicity of the optimal policy with respect to the AoMA. We begin by defining the submodularity of multivariable functions, which is fundamental to our analysis.
\begin{definition}
    Let $\mathcal{X}$ and $\mathcal{Y}$ be ordered sets, and let $f(x,y)$ be a real-valued function on $\mathcal{X}\times\mathcal{Y}$. The function $f(x,y)$ is called submodular if, for all $x_{-}\leq x_{+}$ and $y_{-}\leq y_{+}$,
    \begin{align}
        f(x_{+}, y_{+}) + f(x_{-}, y_{-}) \leq f(x_{+}, y_{-}) + f(x_{-}, y_{+}).\notag
    \end{align}
    If the inequality is reversed, $f(x,y)$ is called supermodular.
\end{definition}
The key result behind submodularity is the following (see, e.g.,~\cite[Lemma~4.7.1]{puterman1994markov}).
\begin{lemma}\label{lemma:submodularity}
    Let $f(x,y)$ be a submodular function on $\mathcal{X}\times\mathcal{Y}$ and assume that for all $x$, $\min_{y\in\mathcal{Y}}f(x,y)$ exists. Then $g(x) = \max\{y^\prime \in \argmin_{y\in\mathcal{Y}}f(x,y)\}$ is nondecreasing in $x$.
\end{lemma}
Recall that the optimal policy is obtained by 
\begin{align}
    \pi^*(i) = \argmin_{a}\{\ell(i, a) + \sum\nolimits_{j}P_{i,j}(a)h(j)\}.\notag
\end{align}
The proof is reduced to verifying the following conditions (see, e.g.,~\cite[Theorem~8.11.3]{puterman1994markov}):
\begin{itemize}
    \item[C1:] $\ell(s,a)$ is nondecreasing in $s\in\mathcal{S}_\text{MA}$ for all $a\in\mathcal{A}$;
    \item[C2:] $q(k|s,a):=\sum_{j=k}^{\infty}\Pr[j|s,a]$ is nondecreasing in $s\in\mathcal{S}_\text{MA}$ for all $k$ and $a\in\mathcal{A}$; 
    \item[C3:] $\ell(s,a)$ is submodular on $\mathcal{S}_\text{MA}\times\mathcal{A}$;
    \item[C4:] and $q(k|s,a)$ is submodular on $\mathcal{S}_\text{MA}\times\mathcal{A}$ for all $k$.
\end{itemize}

We first establish the monotonicity of the functions $\ell(s, a)$ and $q(s^\prime|s, a)$. For any MA error $s \in \mathcal{S}_\text{MA}$, the possibilities of the next states are
\begin{align}
    \Pr[s^\prime|s,a] &= \begin{cases}
    q , &s^\prime=(0, 0, 0), a=0/1,\\
    \bar{q} p_s, &s^\prime=(1, 1, 0), a=1,\\
    \bar{q}p_f, &s^\prime=(1, 0, \delta+1), a=1,\\
    \bar{q}, &s^\prime=(1, 0, \delta+1), a=0.
    \end{cases}\notag
\end{align}
From~\eqref{eq:expected-age-cost}, the cost of taking action $a$ in state $s$ is
\begin{align}
    \ell(s,a) = \begin{cases}
        \beta (\delta+1)\bar{q}p_f + \lambda, &a = 1,\\
        \beta (\delta+1)\bar{q}, &a = 0.
    \end{cases}\label{eq:prov-mono-a}
\end{align}
It is clear that $\ell(s,a)$ is increasing in $s\in\mathcal{S}_\text{MA}$.

For the synced state $s=(0,0,0)$, we may write
\begin{align}
    \Pr[s^\prime|s,a] &= \begin{cases}
    \Bar{p} , &s^\prime=(0, 0, 0), a=0/1,\\
     p p_s, &s^\prime=(1, 1, 0), a=1,\\
    p p_f, &s^\prime=(1, 0, 1), a=1,\\
    p, &s^\prime=(1, 0, 1), a=0,
    \end{cases}\notag
\end{align}
and 
\begin{align}
    \ell(s,a) = \begin{cases}
        \beta pp_f + \lambda, &a = 1,\\
        \beta p, &a = 0.
    \end{cases}\label{eq:prov-mono-b}
\end{align}
From \eqref{eq:prov-mono-a}-\eqref{eq:prov-mono-b}, we conclude that, when $p\leq 2\bar{q}$, $\ell(s,a)$ is monotonically nondecreasing for all $s\in\{(0,0,0)\}\cup\mathcal{S}_\text{MA}$.

The quantity $q(k|s,a)$ for $s = (1,0, \delta)$ is given by
\begin{align}
    q(k|s,a=1) &= \begin{cases}
        1, &k\leq (\cdot, 0), \\
        \bar{q}p_f, &(\cdot, 0)<k\leq (\cdot, \delta+1), \\
        0, &k > (\cdot, \delta+1),
    \end{cases}\label{eq:prove-q-0}\\
    q(k|s,a=0) &= \begin{cases}
        1, &k\leq (\cdot, 0), \\
        \bar{q}, &(\cdot, 0)<k\leq (\cdot, \delta+1), \\
        0, &k > (\cdot, \delta+1).
    \end{cases}\label{eq:prove-q-1}
\end{align}
Hence, $q(k|s,a)$ is nondecreasing for all $s\in\mathcal{S}_\text{MA}$. For the synced state $s=(0,0,0)$, we have
\begin{align}
    q(k|s,a=1) &= \begin{cases}
        1, &k\leq (\cdot, 0), \\
        pp_f, &(\cdot, 0)<k\leq (\cdot, 1), \\
        0, &k > (\cdot, 1),
    \end{cases}\label{eq:prove-q-2}\\
    q(k|s,a=0) &= \begin{cases}
        1, &k\leq (\cdot, 0), \\
        p, &(\cdot, 0)<k\leq (\cdot, 1), \\
        0, &k > (\cdot, 1).
    \end{cases}\label{eq:prove-q-3}
\end{align}
From \eqref{eq:prove-q-0}-\eqref{eq:prove-q-3}, it follows that $q(k|s,a)$ is nondecreasing for all $s\in\{(0,0,0)\}\cup\mathcal{S}_\text{MA}$ if $p\leq \bar{q}$. Therefore, we conclude that $\ell(s, a)$ and $q(k|s,a)$ are nondecreasing in $\mathcal{S}_\text{MA}$, and that monotonicity extends to $\{(0,0,0)\}\cup\mathcal{S}_\text{MA}$ if $p\leq \bar{q}$.

We now establish the submodularity of $\ell(s, a)$ and $q(k|s, a)$. Let $s_{-}\leq s_{+}$, $s_{-}, s_{+}\in\mathcal{S}_\text{MA}$, and $a_{-} = 0, a_{+} = 1$. Then
\begin{align}
    \ell(s_{+},a_{+}) &+ \ell(s_{-},a_{-}) - \big(\ell(s_{+},a_{-}) + \ell(s_{-},a_{+})\big) \notag\\
    &= \beta (\delta_{+}+1)\bar{q}p_f + \lambda + \beta(\delta_{-}+1)\bar{q}\notag\\
    &\qquad -\big(\beta (\delta_{+}+1)\bar{q} + \beta(\delta_{-}+1)\bar{q}p_f + \lambda\big)\notag\\
    &= \bar{q}p_s\beta (\delta_{-}-\delta_{+})\leq 0,\label{eq:prove-l-0}
\end{align}
and
\begin{align}
    q(k|s_{+},a_{+}) &+ q(k|s_{-},a_{-}) \notag\\
    &= \begin{cases}
        2, &k\leq (\cdot, 0), \\
        \bar{q}(1+p_f), &(\cdot, 0)<k\leq (\cdot, \delta_{-}), \\
        \bar{q}p_f, &(\cdot, \delta_{-})<k\leq (\cdot, \delta_{+}), \\
        0, &k > (\cdot, \delta_{+}).
    \end{cases}\\
    q(k|s_{+},a_{-}) &+ q(k|s_{-},a_{+}) \notag\\
    &= \begin{cases}
        2, &k\leq (\cdot, 0), \\
        \bar{q}(1+p_f), &(\cdot, 0)<k\leq (\cdot, \delta_{-}), \\
        \bar{q}, &(\cdot, \delta_{-})<k\leq (\cdot, \delta_{+}), \\
        0, &k > (\cdot, \delta_{+}).\label{eq:prove-l-1}
    \end{cases}
\end{align}
From \eqref{eq:prove-l-0}-\eqref{eq:prove-l-1}, we conclude that $\ell(s,a)$ and $q(k|s,a)$ are submodular for all $s\in\mathcal{S}_\text{MA}$.

For the synced state $s_{-} = (0,0,0)$, we have
\begin{align}
    \ell(s_{+},a_{+}) + \ell(s_{-},a_{-}) &- \big(\ell(s_{+},a_{-}) + \ell(s_{-},a_{+})\big) \notag\\
    &= \beta (\delta_{+}+1)\bar{q}p_f + \lambda + \beta p\notag\\
    &-\big(\beta (\delta_{+}+1)\bar{q} + \beta p p_f + \lambda\big)\notag\\
    &= p_s\beta \big(p-\bar{q}(\delta_{+}+1)\big)\notag\\
    &\leq p_s\beta (p-2\bar{q}),\label{eq:proof-last-ineq}
\end{align}
and
\begin{align}
    q(k|s_{+},a_{+}) &+ q(k|s_{-},a_{-})\notag\\
    &= \begin{cases}
        2, &k\leq (\cdot, 0), \\
        p+\bar{q}p_f, &(\cdot, 0)<k\leq (\cdot, 1), \\
        \bar{q}p_f, &(\cdot, 1)<k\leq (\cdot, \delta_{+}), \\
        0, &k > (\cdot, \delta_{+}).
    \end{cases}\\
    q(k|s_{+},a_{-}) &+ q(k|s_{-},a_{+}) \notag\\
    &= \begin{cases}
        2, &k\leq (\cdot, 0), \\
        \bar{q}+ pp_f, &(\cdot, 0)<k\leq (\cdot, 1), \\
        \bar{q}, &(\cdot, 1)<k\leq (\cdot, \delta_{+}), \\
        0, &k > (\cdot, \delta_{+}).
    \end{cases}\label{eq:proof-last-ineq2}
\end{align}
From \eqref{eq:proof-last-ineq}-\eqref{eq:proof-last-ineq2}, submodularity holds for $(0, 0, 0)$ if
\begin{align}
    p-2\bar{q}\leq 0~~\text{and}~~ p + \bar{q}p_f \leq \bar{q} + pp_f.\notag
\end{align}
Rearranging the terms gives $p\leq \bar{q}$. Therefore, we conclude that $\ell(s, a)$ and $q(k|s,a)$ are submodular in $\mathcal{S}_\text{MA}$, and that submodularity extends to $\{(0,0,0)\}\cup\mathcal{S}_\text{MA}$ if $p\leq \bar{q}$.

Similarly, we can show that monotonicity and submodularity hold for all $s\in\mathcal{S}_\text{FA}$. These results extend to $\{(1,1,0)\}\cup\mathcal{S}_\text{FA}$ if $q\leq \bar{p}$. Note that $p\leq \bar{q}$ implies $q\leq \bar{p}$. This completes the proof.

\renewcommand{\theequation}{C\arabic{equation}}
\setcounter{equation}{0} 

\section{Proof of Proposition~\ref{proposition:stationary-distribution-threshold}}\label{proof:stationary-distribution-threshold}
We first introduce two probability functions, $a(n)$ and $b(n)$, which will find application in various places. Define
\begin{align}
    a(n) = \sum_{k=n}^{\infty}\frac{\nu_{1,0,k}}{\nu_{0,0,0}},\,\, b(n) = \sum_{k=n}^{\infty}\frac{\nu_{0,1,k}}{\nu_{1,1,0}}.
\end{align}
The total stationary probability of all MA errors is
\begin{align}
\sum_{k=1}^{\infty}\nu_{1,0,k}&=\sum_{k=1}^{\bar{\delta}-1}p\bar{q}^{k-1}\nu_{0,0,0} + \sum_{k=\bar{\delta}}^{\infty}p\bar{q}^{\bar{\delta}-1}(\bar{q}p_f)^{k-\bar{\delta}}\nu_{0,0,0}\notag\\
&=\sum_{k=1}^{\bar{\delta}-1}p\bar{q}^{k-1}\nu_{0,0,0} + \sum_{k^\prime=0}^{\infty}p\bar{q}^{\bar{\delta}-1}(\bar{q}p_f)^{k^\prime}\nu_{0,0,0}\notag\\
    &=\left(\frac{p(1-\bar{q}^{\bar{\delta}-1})}{1-\bar{q}} + \frac{p\bar{q}^{\bar{\delta}-1}}{1-\bar{q}p_f}\right)\nu_{0,0,0}.\notag
\end{align}
It gives that
\begin{align}
    a(1) = \frac{p(1-\bar{q}^{\bar{\delta}-1})}{1-\bar{q}} + \frac{p\bar{q}^{\bar{\delta}-1}}{1-\bar{q}p_f}, \, a(\bar{\delta})= \frac{p\bar{q}^{\bar{\delta}-1}}{1-\bar{q}p_f}.\notag
\end{align}
Similarly, the total stationary probability of all FA errors is
\begin{align}
    \sum_{k=1}^{\infty}\nu_{0,1,k} &= \sum_{k=1}^{\ubar{\delta}-1}q\bar{p}^{k-1}\nu_{1,1,0} + \sum_{k=\ubar{\delta}}^{\infty}q\bar{p}^{\ubar{\delta}-1}(\bar{p}p_f)^{k-\ubar{\delta}}\nu_{1,1,0}\notag\\
    &= \sum_{k=1}^{\ubar{\delta}-1}q\bar{p}^{k-1}\nu_{1,1,0} + \sum_{k^\prime=0}^{\infty}q\bar{p}^{\ubar{\delta}-1}(\bar{p}p_f)^{k^\prime}\nu_{1,1,0}\notag\\
    &=\left(\frac{q(1-\bar{p}^{\ubar{\delta}-1})}{1-\bar{p}} + \frac{q\bar{p}^{\ubar{\delta}-1}}{1-\bar{p}p_f}\right)\nu_{1,1,0}.\notag
\end{align}
It gives that
\begin{align}
    b(1) = \frac{q(1-\bar{p}^{\ubar{\delta}-1})}{1-\bar{p}} + \frac{q\bar{p}^{\ubar{\delta}-1}}{1-\bar{p}p_f}, \,\, b(\ubar{\delta}) = \frac{q\bar{p}^{\ubar{\delta}-1}}{1-\bar{p}p_f}.\notag
\end{align}

As the stationary distribution (if it exists) forms a probability distribution, we have
\begin{align}
    (1+a(1))\nu_{0,0,0}+(1+b(1))\nu_{1,1,0} = 1.\label{eq:prov-total}
\end{align}
It must also satisfies the following balance equations:
\begin{align}
    \nu_{0,0,0} &= \bar{p}\nu_{0,0,0} + qa(1)\nu_{0,0,0} + \bar{p}p_s b(\ubar{\delta})\nu_{1,1,0},\notag\\
    \nu_{1,1,0} &= \bar{q}\nu_{1,1,0} + pb(1)\nu_{1,1,0} + \bar{q}p_s a(\bar{\delta})\nu_{0,0,0}.\notag
\end{align}
Rearranging the terms gives
\begin{align}
    (p-qa(1))\nu_{0,0,0} &= \bar{p}p_s b(\ubar{\delta})\nu_{1,1,0},\label{eq:prov-synced-0}\\
    (q-pb(1))\nu_{1,1,0} &= \bar{q}p_s a(\bar{\delta})\nu_{0,0,0}\label{eq:prov-synced-1}.
\end{align}
The stationary distribution exists if the linear system \eqref{eq:prov-total}-\eqref{eq:prov-synced-1} admits a unique solution. Since $\nu_{0,0,0}$ and $\nu_{1,1,0}$ are the only free variables, the proof reduces to showing that~\eqref{eq:prov-synced-0} and \eqref{eq:prov-synced-1} are equivalent. We establish this by noting that
\begin{align}
    p-qa(1) &= p - q \bigg(
    \frac{p(1-\bar{q}^{\bar{\delta}-1})}{1-\bar{q}} + \frac{p\bar{q}^{\bar{\delta}-1}}{1-\bar{q}p_f}
    \bigg)=\bar{q}p_s a(\bar{\delta}),\notag\\
    q-pb(1) &= q - b \bigg(
    \frac{q(1-\bar{p}^{\ubar{\delta}-1})}{1-\bar{p}} + \frac{q\bar{p}^{\ubar{\delta}-1}}{1-\bar{p}p_f}\bigg)=\bar{p}p_s b(\ubar{\delta}).\notag
\end{align}
Substituting the equality $p-qa(1) = \bar{p}p_s b(\ubar{\delta})$ into \eqref{eq:prov-synced-0} yields 
\begin{align}
\bar{q}a(\bar{\delta})\nu_{0,0,0} = \bar{p}b(\ubar{\delta})\nu_{1,1,0}.\label{eq:prov-simplified-2}
\end{align}
Finally, solving~\eqref{eq:prov-total} and~\eqref{eq:prov-simplified-2} gives 
\begin{align}
    \nu_{0,0,0} &= \frac{\bar{p}b(\ubar{\delta})}{(1+a(1))\bar{p}b(\ubar{\delta})+(1+b(1))\bar{q}a(\bar{\delta})} =: \Gamma_0(a, b),\notag\\
    \nu_{1,1,0} &= \frac{\bar{q}a(\bar{\delta})}{(1+a(1))\bar{p}b(\ubar{\delta})+(1+b(1))\bar{q}a(\bar{\delta})} =: \Gamma_1(a, b).\notag
\end{align}
This completes the proof.

\renewcommand{\theequation}{D\arabic{equation}}
\setcounter{equation}{0} 

\section{Proof of Theorem~\ref{theorem:average-cost-threshold}}\label{proof:average-cost-threshold}
Recall that MC2 induced by the switching policy admits a unique stationary distribution. The average cost is simply the expected total cost of being in these states, i.e., $\sum_{i,j,\delta}\delta\nu_{i,j,\delta}$. The average cost of all MA errors is
\begin{align}
    &C_{\text{MA}}(\pi) =\beta\sum_{k=1}^{\infty} k\nu_{1,0,k}\notag\\
    &= \beta\bigg(\sum_{k=1}^{\bar{\delta}-1}kp\bar{q}^{k-1}\nu_{0,0,0} + \sum_{k=\bar{\delta}}^{\infty}kp\bar{q}^{\bar{\delta}-1}(\bar{q}p_f)^{k-\bar{\delta}}\nu_{0,0,0}\bigg)\notag\\
    &= \beta p \nu_{0,0,0}\bigg( \sum_{k=1}^{\bar{\delta}-1}k\bar{q}^{k-1} + \bar{q}^{\bar{\delta}-1}\sum_{k^\prime=0}^{\infty}(k^\prime + \bar{\delta})(\bar{q}p_f)^{k^\prime}\bigg).\label{eq:prov-ma-costs}
\end{align}
By using the facts that when $0<\rho<1$, $\sum_{k=0}^{\infty}k \rho^{k} = \frac{\rho}{(1-\rho)^2},~\sum_{k=1}^{n}k \rho^{k-1} = \frac{1-\rho^n}{(1-\rho)^2} - \frac{n\rho^n}{1-\rho}$, it follows that
\begin{align}
    C_{\text{MA}}(\pi) &= \beta p \nu_{0,0,0}
    \bigg(\frac{1-\big(\bar{q} + (1-\bar{q})\bar{\delta}\big)\bar{q}^{\bar{\delta}-1}}{(1-\bar{q})^2} \notag\\
    &\qquad \qquad \qquad + \frac{\big(\bar{q}p_f + (1-\bar{q}p_f)\bar{\delta}\big)\bar{q}^{\bar{\delta}-1}}{(1-\bar{q}p_f)^2}\bigg).\label{eq:prov-ma-costs-2}
 \end{align}
Similarly, the average cost of all FA errors is
\begin{align}
    C_{\text{FA}}(\pi) &= (1-\beta) q \nu_{1,1,0}\bigg(\frac{1-\big(\bar{p} + (1-\bar{p})\ubar{\delta}\big)\bar{p}^{\ubar{\delta}-1}}{(1-\bar{p})^2} \notag\\
    &\qquad \qquad \qquad + \frac{\big(\bar{p}p_f + (1-\bar{p}p_f)\ubar{\delta}\big)\bar{p}^{\ubar{\delta}-1}}{(1-\bar{p}p_f)^2}\bigg).\label{eq:prov-ma-costs-3}
 \end{align}
The transmission frequency can be computed by summing the steady probabilities of all exceeding-threshold states, i.e.,
\begin{align}
    F(\pi) \hspace{-0.2em}=\hspace{-0.2em} \sum_{k = \bar{\delta}}^{\infty}\hspace{-0.2em}\nu_{1,0,k} \hspace{-0.2em}+\hspace{-0.2em} \sum_{k = \ubar{\delta}}^{\infty}\hspace{-0.2em}\nu_{0,1,k} = a(\bar{\delta})\nu_{0,0,0} + b(\ubar{\delta})\nu_{1,1,0}.\label{eq:prov-ma-costs-4}
\end{align}
Theorem~\ref{theorem:average-cost-threshold} follows from~\eqref{eq:prov-ma-costs}-\eqref{eq:prov-ma-costs-4}.

\renewcommand{\theequation}{E\arabic{equation}}
\setcounter{equation}{0} 

\section{Proof of Proposition~\ref{proposition:truncated-MDP-convergence}}\label{proof:truncated-MDP-convergence}
We only verify the case when $\bar{\delta}, \ubar{\delta}\geq 1$. The proofs for the other cases are analogous. As depicted in Fig.~\ref{fig:truncated-DTMC}, the stationary probability of the boundary state $(1,0,N)$ satisfies 
\begin{align}
    \nu_{1,0,N}(N) = \bar{q}p_f \nu_{1,0,N-1}(N) + \bar{q}p_f \nu_{1,0,N}(N).\notag
\end{align}
Rearranging the term yields
\begin{align}
    \nu_{1,0,N}(N) = \frac{\bar{q}p_f}{1-\bar{q}p_f}\nu_{1,0,N-1}(N).
\end{align}
Then, the quantity $a_N(\bar{\delta})$ is given by
\begin{align}
a_N(\bar{\delta}) &= \sum_{k=\bar{\delta}}^{N}\frac{\nu_{1,0,k}(N)}{\nu_{0,0,0}(N)} \notag\\
&= \sum_{k=\bar{\delta}}^{N-1}p\bar{q}^{\bar{\delta}-1}(\bar{q}p_f)^{k-\bar{\delta}} + p\bar{q}^{\bar{\delta}-1}(\bar{q}p_f)^{N-\bar{\delta}-1}\frac{\bar{q}p_f}{1-\bar{q}p_f}\notag\\
&= p\bar{q}^{\bar{\delta}-1}\sum_{k^\prime=0}^{N-\bar{\delta}-1}(\bar{q}p_f)^{k^\prime} + \frac{p\bar{q}^{\bar{\delta}-1}(\bar{q}p_f)^{N-\bar{\delta}}}{1-\bar{q}p_f}\notag\\
&=p\bar{q}^{\bar{\delta}-1}\frac{1-(\bar{q}p_f)^{N-\bar{\delta}}}{1-\bar{q}p_f}+ \frac{p\bar{q}^{\bar{\delta}-1}(\bar{q}p_f)^{N-\bar{\delta}}}{1-\bar{q}p_f}\notag\\
&=\frac{p\bar{q}^{\bar{\delta}-1}}{1-\bar{q}p_f} = a(\bar{\delta}).\notag
\end{align}
Similarly, one can show that $a_N(1) = a(1)$, $b_N(1) = b(1)$, and $b_N(\ubar{\delta}) = b(\ubar{\delta})$. Therefore, $\nu_{i,j,\delta}(N)=\nu_{i,j,\delta}$ for all $(i,j,\delta)\in\mathcal{S}_N^\text{in}$. This yields~\eqref{eq:stationary-distribution-truncated}.

The average cost of all MA errors is given by
\begin{align}
    C_\text{MA}&(\pi,N) = \beta\sum_{k=1}^{N} k\nu_{1,0,k}(N) = \beta p\nu_{0,0,0} \bigg(\sum_{k=1}^{\bar{\delta}-1}k\bar{q}^{k-1}\notag\\
    &+ \sum_{k=\bar{\delta}}^{N-1}k\bar{q}^{\bar{\delta}-1}(\bar{q}p_f)^{k-\bar{\delta}}+ \frac{N\bar{q}^{\bar{\delta}-1}(\bar{q}p_f)^{N-\bar{\delta}}}{1-\bar{q}p_f} \bigg).\label{eq:prov-ma-total}
\end{align}
The first term of the rightmost expression is the same as in Eq.~\eqref{eq:prov-ma-costs}. The remaining terms can be computed as
\begin{align}
&\sum_{k=\bar{\delta}}^{N-1} k\bar{q}^{\bar{\delta}-1}(\bar{q}p_f)^{k-\bar{\delta}} + \frac{N\bar{q}^{\bar{\delta}-1}(\bar{q}p_f)^{N-\bar{\delta}}}{1-\bar{q}p_f} \notag\\
=~&\bar{q}^{\bar{\delta}-1}\bigg(\sum_{k^\prime=0}^{N-\bar{\delta}-1}(k^\prime+\bar{\delta})(\bar{q}p_f)^{k^\prime} + \frac{N(\bar{q}p_f)^{N-\bar{\delta}}}{1-\bar{q}p_f}\bigg) \notag\\
=~& \bar{q}^{\bar{\delta}-1}\bigg(
    \frac{\bar{q}p_f - (\bar{q}p_f)^{N-\bar{\delta}}}{(1-\bar{q}p_f)^2} - \frac{(N-\bar{\delta}-1)(\bar{q}p_f)^{N-\bar{\delta}}}{1-\bar{q}p_f} \notag\\
&\quad\quad\quad\quad\quad +\frac{\bar{\delta}(1-(\bar{q}p_f)^{N-\bar{\delta}})}{1-\bar{q}p_f} + \frac{N(\bar{q}p_f)^{N-\bar{\delta}}}{1-\bar{q}p_f}
    \bigg)\notag\\
=~& \frac{\big(\bar{q}p_f + (1-\bar{q}p_f)\bar{\delta}\big)\bar{q}^{\bar{\delta}-1}}{(1-\bar{q}p_f)^2} - \frac{\bar{q}^N p_f^{N-\bar{\delta}+1}}{(1-\bar{q}p_f)^2}.\label{eq:prov-second}
\end{align}
Substituting \eqref{eq:prov-ma-costs-2} and \eqref{eq:prov-second} into \eqref{eq:prov-ma-total} gives
\begin{align}
    C_\text{MA}(\pi,N) = C_\text{MA}(\pi) - \beta p\nu_{0,0,0}\frac{\bar{q}^N p_f^{N-\bar{\delta}+1}}{(1-\bar{q}p_f)^2}.\label{eq:prove-E-1}
\end{align}
Similarly, the average cost of all FA errors is
\begin{align}
    C_\text{FA}(\pi,N) = C_\text{FA}(\pi) - (1-\beta)q\nu_{1,1,0}\frac{\bar{p}^N p_f^{N-\ubar{\delta}+1}}{(1-\bar{p}p_f)^2}.\label{eq:prove-E-2}
\end{align}
Since $a_N(\bar{\delta}) = a(\bar{\delta})$ and $b_N(\ubar{\delta}) = b(\ubar{\delta})$, we have 
\begin{align}
    F(\pi,N) = a_N(\bar{\delta})\nu_{0,0,0} + b_N(\ubar{\delta})\nu_{1,1,0}= F(\pi).\label{eq:prove-E-3}
\end{align}
The result in~\eqref{eq:average-cost-threshold-truncated} follows from~\eqref{eq:prove-E-1}-\eqref{eq:prove-E-3}.

\renewcommand{\theequation}{F\arabic{equation}}
\setcounter{equation}{0} 

\section{Proof of Proposition~\ref{theorem:slice-method}}\label{proof:slice-method}
The average cost of a switching policy $\pi=(x,\ubar{\delta})$ is
\begin{align}
\mathcal{L}(x,\ubar{\delta})\hspace{-0.2em}=\hspace{-0.2em}\frac{\beta p \bar{p} b(\ubar{\delta}) \psi(x) \hspace{-0.2em}+\hspace{-0.2em} \big(\lambda (\bar{p}+\bar{q}) b(\ubar{\delta}) \hspace{-0.2em}+\hspace{-0.2em} (1\hspace{-0.2em}-\hspace{-0.2em}\beta)q\bar{q}\psi(\ubar{\delta})\big)a(x)}{\bar{p}b(\ubar{\delta})(1+\zeta(x)) + \big(\bar{q}(1+b(1))+\bar{p}b(\ubar{\delta})\big)a(x)},\notag
\end{align}
where $\psi(x) = \psi(\bar{q},x), \psi(\ubar{\delta}) = \psi(\bar{p},\ubar{\delta}), a(x) = \frac{p\bar{q}^{x-1}}{1-\bar{q}p_f}$, $\zeta(x) = \frac{p(1-\bar{q}^{x-1})}{1-\bar{q}}$. The derivatives of $\psi(x), a(x)$, and $\zeta(x)$ are 
\begin{align}
    a^{\prime}(x) &= \frac{p\bar{q}^{x-1}\ln{\bar{q}}}{1-\bar{q}p_f},\,\,\,
    \zeta^{\prime}(x) = \frac{-p\bar{q}^{x-1}\ln{\bar{q}}}{1-\bar{q}},\notag\\
    \psi^{\prime}(x) &=\frac{-(1-\bar{q})\bar{q}^{x-1} - \big(\bar{q} + (1-\bar{q})x\big)\bar{q}^{x-1}\ln{\bar{q}}}{(1-\bar{q})^2}\notag\\
    &\quad+ \frac{(1-\bar{q}p_f)\bar{q}^{x-1} + \big(\bar{q}p_f + (1-\bar{q}p_f)x\big) \bar{q}^{x-1}\ln{\bar{q}}}{(1-\bar{q}p_f)^2}\notag\\
    &= \bigg(\frac{-\bar{q}p_s(1+x\ln{\bar{q}})}{(1-\bar{q})(1-\bar{q}p_f)} \hspace{-0.2em}+ \hspace{-0.2em} \frac{\bar{q}p_f\ln{\bar{q}}}{(1-\bar{q}p_f)^2} \hspace{-0.2em}- \hspace{-0.2em} \frac{\bar{q}\ln{\bar{q}}}{(1-\bar{q})^2}\bigg)\bar{q}^{x-1}.\notag
\end{align}
For notational simplicity, let $\mathcal{L}(x,\ubar{\delta}) = g(x)/h(x)$, where
\begin{align}
    g(x) &= \beta p \bar{p} b(\ubar{\delta}) \psi(x) + \big(\lambda (\bar{p}+\bar{q}) b(\ubar{\delta}) + (1\hspace{-0.2em}-\hspace{-0.2em}\beta)q\bar{q}\psi(\ubar{\delta})\big)a(x),\notag\\
    h(x) &= \bar{p}b(\ubar{\delta})(1+\zeta(x)) + \big(\bar{q}(1+b(1))+\bar{p}b(\ubar{\delta})\big)a(x).\notag
\end{align}
The derivatives of $g(x)$ and $h(x)$ are given by
\begin{align}
    g^{\prime}(x) = (\alpha_0 + \alpha_1 x) \bar{q}^{x-1},\quad h^{\prime}(x) = \alpha_2 \bar{q}^{x-1}.\notag
\end{align}
where $\alpha_i, i=0, 1, 2$, are obtained as
\begin{align}
    \alpha_0 &= \bigg(\lambda (\bar{p}+\bar{q}) b(\ubar{\delta}) + (1-\beta)q\bar{q}\psi(\ubar{\delta})\bigg)\frac{p\ln{\bar{q}}}{1-\bar{q}p_f} +\beta p \bar{p} b(\ubar{\delta}) \notag\\
    &\times
    \bigg(\frac{-\bar{q}p_s}{(1-\bar{q})(1-\bar{q}p_f)} + \frac{\bar{q}p_f\ln{\bar{q}}}{(1-\bar{q}p_f)^2} - \frac{\bar{q}\ln{\bar{q}}}{(1-\bar{q})^2}\bigg),\label{eq:prove-F-1}\\
    \alpha_1 &= \frac{-\beta p \bar{p} b(\ubar{\delta})\bar{q}p_s \ln{\bar{q}}}{(1-\bar{q})(1-\bar{q}p_f)}>0,\label{eq:prove-F-2}\\
    \alpha_2 &= \bigg(\frac{\bar{p}p_s b(\ubar{\delta})}{(1-\bar{q})(1-\bar{q}p_f)}+\frac{\bar{q}(1+b(1)}{1-\bar{q}p_f}\bigg)\ln{\bar{q}}<0.\label{eq:prove-F-3}
\end{align}
The first term on the right-hand side of~\eqref{eq:prove-F-1} is negative, while the second term can be either positive or negative. Hence, the sign of $g^{\prime}(x)$ is uncertain.

With~\eqref{eq:prove-F-1}-\eqref{eq:prove-F-3}, the derivative of $\mathcal{L}(x, \ubar{\delta})$ is given by
\begin{align}
    \frac{\rm{d}\mathcal{L}(x, \ubar{\delta})}{\rm{d} x} &= \frac{g^{\prime}(x) h(x) - h^{\prime}(x) g(x)}{h^2(x)} \notag\\
    &=\frac{\big((\alpha_0 + \alpha_1 x) h(x) - \alpha_2 g(x)\big)\bar{q}^{x-1}}{h^2(x)}.
\end{align}
Since $g(x)$ and $h(x)$ are positive-valued functions, it follows that when $\alpha_0\geq 0$, $\frac{\rm{d}\mathcal{L}(x, \ubar{\delta})}{\rm{d} x} > 0$ holds for all $x\geq 1$. In this case, $\mathcal{L}(x, \ubar{\delta})$ is a monotone increasing function and $x^*_{\ubar{\delta}} = 1$. When $\alpha_0 < 0$, it follows that $\frac{\rm{d}\mathcal{L}(x, \ubar{\delta})}{\rm{d} x} > 0$ for all $x\geq x^*_{\ubar{\delta}}$, where $x^*_{\ubar{\delta}} = -\frac{\alpha_0}{\alpha_1}$. Similarly, one can show that, for each $\bar{\delta}\geq 1$, there exists some $y^*_{\bar{\delta}} \geq 1$ such that $\mathcal{L}(\bar{\delta}, y)$ is increasing for all $y \geq y^*_{\bar{\delta}}$. This completes our proof.

\bibliographystyle{IEEEtran}
\bibliography{ref}

\end{document}